\newcommand{\sat}{\vDash}
\newcommand{\unsat}{\not \vDash}
\newcommand{\imp}{\rightarrow}
\newcommand{\R}[1]{\mu^{-1}_{#1}}
\newcommand{\Proj}[1]{\mu_{#1}}
\newcommand{\skt}{\text{skt}}
\newcommand{\apr}{ \Rightarrow_{APR}}
\newcommand{\I}[1]{\mathcal{I}_{#1}}
\newtheorem{defin}{Definition}
\newtheorem{lem}[theorem]{Lemma}
\newtheorem{cor}[theorem]{Corollary}
\newtheorem{exmp}{Example}
\newcommand{\vars}{\operatorname{vars}}
\newcommand{\shortrules}[6]{\noindent\begin{minipage}{#6ex}{\bfseries #1}\end{minipage} $\;$ #2 $\;\Rightarrow_{\text{#5}}\;$ #3 \par\smallskip\noindent #4} 
\title{First-Order Logic Theorem Proving and Model Building via Approximation and Instantiation}
\author{Andreas Teucke\inst{1,2} \and Christoph Weidenbach\inst{1}}
\institute{Max-Planck Institut for Informatics, Campus E1 4
66123 Saarbr\"ucken
Germany \and Graduate School of Computer Science, Saarbr\"ucken, Germany
}
\date{}
\begin{document}
\pagestyle{plain}
\maketitle

\begin{abstract}
In this paper we consider first-order logic theorem proving and model
building via approximation and instantiation. 
Given a clause set we propose its approximation into a simplified clause set where
satisfiability is decidable. The approximation extends the signature
and  preserves unsatisfiability: if the simplified clause set is satisfiable in some model, 
so is the original clause set in the same model interpreted in the original signature. 
A refutation generated by a decision procedure on the simplified clause set can 
then either be lifted to a refutation in 
the original clause set, or it guides 
a refinement excluding the previously found unliftable refutation. 
This way the approach is refutationally complete. We do not step-wise lift refutations
but conflicting cores, finite unsatisfiable clause sets representing at least one refutation.
The approach is dual to many
existing approaches in the literature because our approximation preserves unsatisfiability.
\end{abstract}

\section{Introduction} \label{sec:intro}

The Inst-Gen calculus by Ganzinger and Korovin~\cite{DBLP:conf/birthday/Korovin13} and its
implementation in iProver has shown to be very successfull. The calculus is based on a
under-approximation - instantiation refinement loop. A given first-order clause set is under-approximated
by finite grounding and afterwards a SAT-solver is used to test unsatisfiability. If the
ground clause set is unsatisfiable then a refutation for the original clause set is found.
If it is satisfiable, the model generated by the SAT-solver is typically not a model for
the original clause set. If it is not, it is used to instantiate the
original clause such that the found model is ruled out for the future.

In this paper we define a calculus that is dual to the Inst-Gen calculus.
A given first-order clause set is over-approximated into a decidable fragment
of first-order logic: a monadic, shallow, linear Horn (mslH) theory~\cite{Weidenbach99cade}.
If the over-approximated clause set is satisfiable, so is the original clause set.
If it is unsatisfiable, the found refutation is typically not a refutation for
the original clause set. If it is not, the refutation is analyzed to instantiate
the original clause set such that the found refutation is ruled out for the future.
The mslH fragment properly include first-order ground logic, but is also expressive
enough to represent minimal infinite models.

In addition to developing a new proof method for first-order logic this constitutes our second motivation
for studying the new calculus and the particular mslH approximation. It is meanwhile accepted that
a model-based guidence can significantly improve an automated reasoning calculus. The propositional
CDCL calculus~\cite{NieuwenhuisEtAl06} 
is one prominent example for this insight. In first-order logic, (partial) model operators
typically generate inductive models for which almost all interesting properties become undecidable,
in general. One way out of this problem is to generate a model for an approximated clause set, such
that important properties with respect to the original clause set are preserved. In the case of our calculus and approximation, a found model
can be effectively translated into a model for the original clause set. So our result is also a first
step towards model-based guidence in first-order logic automated reasoning.

For example, consider the first-order Horn clauses 
$S(x) \rightarrow P(x,g(x))$; $ S(a)$; $ S(b)$; $ S(g(x))$; $ \neg P(a, g(b))$; $ \neg P(g(x), g(g(x)))$
that are approximated (Section~\ref{sec:abstr}) into the mslH theory
$S(x), R(y) \rightarrow T(f_P(x,y))$; $ S(x) \rightarrow R(g(x))$; $ S(a)$; $ S(b)$; $S(g(x))$; $ \neg T(f_P(a, g(b)))$; $ \neg T(f_P(g(x), g(g(x))))$
where the relation $P$ is encoded by the function $f_P$ and the non-linear occurrence of $x$ in the first
clause is approximated by the introduction of the additional variable $y$. The approximated clause set
has two refutations: one using $\neg T(f_P(a, g(b)))$ and the second using $\neg T(f_P(g(x), g(g(x))))$ plus
the rest of the clauses, respectively. While the first refutation cannot be lifted, the second one is liftable to 
a refutation of the original clause set (Section~\ref{sec:lifting}). Actually, we do not consider refutations, but conflicting cores (Definition~\ref{conlfictCore}).
Conflicting cores are finite, unsatisfiable clause sets where variables are considered to be shared among clauses and rigid such that
any instantiation preserves unsatisfiability. Conflicting cores can be effectively generated out of refutations via instantiation
of (copies of) the input clauses involved in the refutation.
For the above second refutation the conflicting core of the approximated clause set is
$S(g(x)), R(g(g(x))) \rightarrow T(f_P(g(x),g(g(x))))$; $ S(g(x)) \rightarrow R(g(g(x)))$; $ S(g(x))$; $ \neg T(f_P(g(x), g(g(x))))$.\newline
In case the first refutation is selected for lifting, it fails, so the original clause set is refined (Section~\ref{abstrref}). The refinement
replaces the first clause with\newline
$S(a) \rightarrow P(a,g(a))$; $ S(b) \rightarrow P(b,g(b))$ and $ S(g(x)) \rightarrow P(g(x),g(g(x)))$.\newline
The approximation of the resulting new clause set does no longer enable a refutation using $\neg T(f_P(a, g(b)))$. Therefore, the refutation using $\neg T(f_P(g(x), g(g(x))))$ is found after refinement.
In case the original clause set contains a non-Horn clause, one positive literal is selected by the approximation.

The paper is now organized as follows. Section~\ref{sec:abstr} introduces some basic notions and the approximation relation $\apr$
that transforms any first-order clause set into an mslH theory.
The lifting of conflicting cores is described in Section~\ref{sec:lifting} and the respective abstraction refinement
in Section~\ref{abstrref} including soundness and completeness results. Missing proofs can be found in the appendix.
The paper ends with Section~\ref{sec:furewo} on future/related work and a conclusion.

\section{Linear Shallow Monadic Horn Approximation} \label{sec:abstr}

We consider a standard first-order language without equality where $\Sigma$ 
denotes the set of function symbols.
The symbols $x,y$ denote variables, $a,b$ constants, 
$f,g,h$ are functions and $s,t$ terms. 
Predicates are denoted by $S,P,Q,R$, literals by $E$, clauses by $C,D$, and sets of clauses by  $N,M$.
The term $t[s]_p$ denotes that the term $t$ has the subterm $s$ at position $p$. 
The notion is extended to atoms, clauses, and multiple positions.
A predicate with at most one argument is called monadic.
A literal is either an atom or an atom preceded by $\neg$ and it is then respectively  called positive or negative.
A term is shallow if it has at most depth one. 
It is called linear if there are no duplicate variable occurrences. 
A literal, where every term is shallow, is also called shallow. 
A clause is a multiset of literals which we write as an implication $\Gamma \imp \Delta$ where the atoms in $\Delta$ denote the positive literals
and the atoms in $\Gamma$ the negative literals.
If $\Gamma$ is empty we omit $\imp$, e.g., we write $P(x)$ instead of $\imp P(x)$ whereas if
$\Delta$ is empty $\imp$ is always shown.
If a clause has at most one positive literal, it is a Horn clause.
If there are no variables, then terms, atoms and clauses are respectively called ground.
A substitution $\sigma$ is a mapping from variables into terms denoted by pairs $\{x \mapsto t\}$.
If for some term (literal, clause) $t$, $t\sigma$ is ground, then $\sigma$ is a grounding substitution.


A Herbrand interpretation $I$ is a - possibly infinite - set of positive ground literals and $I$ is said to satisfy a 
clause $C= \Gamma \imp \Delta$, denoted by $I\sat C$, if  $\Delta\sigma \cap I \neq \emptyset$ or $\Gamma\sigma \not\subseteq I$ for every grounding substitution $\sigma$.
An interpretation $I$ is called a model of $N$ if $I$ satisfies $N$, $I\sat N$, i.e., 
$I\sat C$ for every $C\in N$. Models are considered \emph{minimal} with respect to set inclusion.
A set of clauses $N$ is satisfiable, if there exists a model that satisfies $N$. Otherwise the set is unsatisfiable.

\begin{defin}[Conflicting Core]\label{conlfictCore}
A finite clause set $N^\bot$ is a conflicting core if for all grounding substitutions $\tau$ the clause set $N^\bot\tau$ is unsatisfiable. 
$N^\bot$ is a conflicting core of $N$ if $N^\bot$ is a conflicting core 
and for every clause $C\in N^\bot$ there exists a $C'\in N$ such that $C=C'\sigma$.
\end{defin}

\begin{defin}[Specific Instances]\label{specInst}
Let $C$ be a clause and $\sigma_1$, $\sigma_2$ be two substitutions
such that $C\sigma_1$ and $C\sigma_2$ have no common instances.
Then the \emph{specific instances} of $C$ with
respect to $\sigma_1$, $\sigma_2$ are clauses $C\tau_1,\ldots,C\tau_n$ such that
(i)~any ground instance of $C$ is an instance of some $C\tau_i$,
(ii)~there is no $C\tau_i$ such that both $C\sigma_1$ and $C\sigma_2$ are 
instances of $C\tau_i$.
\end{defin}

The definition of specific instances can be extended to a single substitution $\sigma$.
In this case we require $C$ and $\sigma$ to be linear,
condition~(i) from Definition~\ref{specInst} above, $C\sigma = C\tau_1$ and no $C\tau_i$, 
$i\neq 1$ has a common instance with $C\tau_1$.
Note that under the above restrictions specific instances always exist~\cite{Lassez:1987:ERT:33031.33036}.

\begin{defin}[Approximation]
Given a clause set $N$ and a relation $\Rightarrow$ on clause sets with $N \Rightarrow N'$ then
(1)~$\Rightarrow$ is called an \emph{over-approximation}  if satisfiability of $N'$ implies satisfiability of $N$,
(2)~$\Rightarrow$ is called an \emph{under-approximation} if unsatisfiability of $N'$ implies unsatisfiability of $N$.
\end{defin}


Next we introduce our concrete over-approximation $\apr$ that 
eventually maps a clause set $N$ to an mslH
clause set $N'$. Starting from a clause set $N$ the transformation is parameterized
by a single monadic projection predicate $T$, fresh to $N$
and for each non-monadic predicate $P$ a projection function $f_P$ fresh to $N$. The approximation
always applies to a single clause and we establish on the fly an ancestor relation between 
the approximated clause(s) and the parent clause. The ancestor relation is needed for lifting and refinement.

\bigskip
\shortrules{Monadic}{$N\cup\{\Gamma \imp \Delta,P(t_1,\dots,t_n)\}$}{$N\cup\{\Gamma \imp \Delta,T(f_P(t_1,\dots,t_n))\}$}{provided $n>1$; $P(t_1,\dots,t_n)$ is the ancestor of $T(f_P(t_1,\dots,t_n))$}{MO}{10}

\bigskip
\shortrules{Horn}{$N\cup\{\Gamma \imp E_1, \dots,E_n\}$}{$N\cup\{\Gamma \imp E_i\}$}{provided $n>1$; $\Gamma \imp E_1, \dots,E_n$ is the ancestor of $\Gamma \imp E_i$}{HO}{10}

\bigskip
\shortrules{Shallow}{$N\cup\{\Gamma \imp E[s]_{p}\}$}{$N\cup\{S(x),\Gamma_1 \imp E[x]_{p}\}\cup\{ \Gamma_2 \imp S(s)\}$}{provided $s$ is a complex term, $p$ not a top position, $x$ and $S$ fresh, 
and $\Gamma_1 \cup \Gamma_2 = \Gamma$; $\Gamma \imp E[s]_{p}$ is the ancestor of $S(x),\Gamma_1 \imp E[x]_{p}$ and $\Gamma_2 \imp S(s)$}{SH}{10}

\bigskip
\shortrules{Linear}{$N\cup\{\Gamma \imp E[x]_{p,q}\}$}{$N\cup\{\Gamma\{x \mapsto x'\},\Gamma \imp E[x']_q\}$}{provided $x'$ is fresh, the positions $p$, $q$ denote
two different occurrences of $x$ in $E$; $\Gamma \imp E[x]_{p,q}$ is the ancestor of $\Gamma\{x \mapsto x'\},\Gamma \imp E[x']_q$}{LI}{10}

\bigskip
For the Horn transformation, the choice of the $E_i$ is arbitrary.
In the Shallow rule, $\Gamma_1$ and $\Gamma_2$ can be arbitrarily chosen as long as they ``add up'' to $\Gamma$.
The goal, however, is to minimize the set of common variables 
$\vars(\Gamma_2,s) \cap \vars(\Gamma_1,E[x]_p)$.
If this set is empty the Shallow transformation is satisfiability preserving.
In rule Linear, the duplication of $\Gamma$ is not needed if $x\not\in\vars(\Gamma)$.

\begin{defin}[$\apr$] \label{def:approx:trans}
The overall approximation $\apr$ is given by $\apr\; = \;\Rightarrow_{\text{MO}}\cup\Rightarrow_{\text{HO}}\cup\Rightarrow_{\text{SH}}\cup\Rightarrow_{\text{LI}}$
with a preference on the different rules where  Monadic precede Horn precede Shallow precede Linear transformations.\\
\end{defin}

\begin{defin}
Given a non-monadic n-ary predicate $P$, projection predicate $T$, and projection function $f_P$,
define the injective function $\Proj{P}(P(t_1,\dots,t_n)) := T(f_p(t_1,\dots, t_n))$
and $\Proj{P}(Q(s_1,\dots,s_m)) := Q(s_1,\dots,s_m)$ for any atom with a predicate symbol different from $P$.
The function is extended to clauses, clause sets and interpretations.
\end{defin}

\begin{lem}[$\apr$ is sound and terminating]\label{Apr_sound}
The approximation rules are sound and terminating:
(i)~$\apr$  terminates
(ii)~the Monadic transformation is an over-approximation
(iii)~the Horn transformation is an over-approximation
(iv)~the Shallow transformation is an over-approximation
(v)~the Linear transformation is an over-approximation
\end{lem}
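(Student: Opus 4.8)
The plan is to split the statement into its five parts and handle termination separately from the four soundness claims, since they use completely different techniques. For part (i), termination, I would define a well-founded measure on clause sets that decreases with every rule application. The natural measure is a lexicographic combination: first the number of atoms of a non-monadic predicate occurring in $N$ (decreased by Monadic, untouched or decreased by the others once Monadic is exhausted), then the total number of ``extra'' positive literals $\sum_{C \in N}(\max(|\Delta_C| - 1, 0))$ for the Horn step, then a multiset measure on the depths of subterms at non-top positions in literals for the Shallow step, and finally the number of duplicate variable occurrences in literals for the Linear step. One must check that each rule, respecting the stated precedence, strictly decreases the corresponding component while not increasing the earlier ones; the precedence ordering in Definition~\ref{def:approx:trans} is exactly what makes this work, e.g. Shallow introduces a fresh monadic predicate $S$ and thus cannot resurrect a non-monadic atom, and Linear does not add literals of a non-monadic predicate.

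For parts (ii)--(v), the common strategy is: given a model $I'$ of the approximated set $N'$, construct a model $I$ of the original set $N$. Since each individual rule replaces one clause (or one pair of clauses) while leaving the rest of $N$ fixed, it suffices to argue locally about the affected clauses and to take $I = I'$ (restricted to the original signature, or composed with the projection $\Proj{P}$ where relevant). For Monadic (ii): if $I' \sat \Gamma \imp \Delta, T(f_P(t_1,\dots,t_n))$, define $I = \Proj{P}^{-1}(I')$, i.e. put $P(s_1,\dots,s_n) \in I$ iff $T(f_P(s_1,\dots,s_n)) \in I'$; injectivity of $f_P$ (and $T$ being fresh) makes this well-defined and immediately transfers satisfaction of the clause, and leaves all other clauses unaffected. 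For Horn (iii): a model of $\Gamma \imp E_i$ is a fortiori a model of $\Gamma \imp E_1,\dots,E_n$ since satisfying one disjunct suffices. For Linear (iv or v): from $I' \sat \Gamma\{x \mapsto x'\}, \Gamma \imp E[x']_q$, to check $I' \sat \Gamma \imp E[x]_{p,q}$ take any grounding $\sigma$ with $\Gamma\sigma \subseteq I'$; instantiate the renamed clause with $\sigma$ extended by $x' \mapsto x\sigma$, which satisfies both copies of $\Gamma$, forcing $E[x\sigma,x\sigma] = E\sigma$ to be satisfied.

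The genuinely delicate case is Shallow (iv). Here $N \cup \{\Gamma \imp E[s]_p\}$ becomes $N \cup \{S(x), \Gamma_1 \imp E[x]_p\} \cup \{\Gamma_2 \imp S(s)\}$ with $S$ a fresh monadic predicate and $\Gamma_1 \cup \Gamma_2 = \Gamma$. Given a model $I'$ of the larger set, I would keep $I'$ on the old signature and must verify $I' \sat \Gamma \imp E[s]_p$: take a grounding $\sigma$ with $\Gamma\sigma \subseteq I'$, hence $\Gamma_2\sigma \subseteq I'$, so the clause $\Gamma_2 \imp S(s)$ forces $S(s\sigma) \in I'$; now instantiate $S(x),\Gamma_1 \imp E[x]_p$ with $\sigma$ extended by $x \mapsto s\sigma$ — both $S(s\sigma)$ and $\Gamma_1\sigma$ hold, so $E[s\sigma]_p = E\sigma$ is satisfied. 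The subtlety, and the reason the paper flags that the transformation is only \emph{satisfiability}-preserving (not just unsat-preserving) when $\vars(\Gamma_2,s) \cap \vars(\Gamma_1, E[x]_p) = \emptyset$, is that the argument above goes through for the over-approximation direction regardless of shared variables, because we are building the model of $N$ from a model of $N'$ and the two split clauses are joined along the shared term $s\sigma$ by a single ground substitution; I expect the main obstacle to be stating this gluing step cleanly and making sure the freshness of $S$ and $x$ is used to guarantee that extending $\sigma$ on $x$ does not disturb satisfaction of any other clause in $N$ or of $\Gamma_2 \imp S(s)$.
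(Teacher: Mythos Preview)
Your proposal is correct and follows essentially the same route as the paper: sequential/lexicographic measures for termination exploiting the rule precedence, the projection $\Proj{P}^{-1}$ for Monadic, subsumption for Horn, and the substitution $\{x'\mapsto x\}\sigma$ for Linear. For Shallow the paper phrases the argument as ``$I$ satisfies the resolvent $\Gamma_1\sigma,\Gamma_2\sigma \imp E[s]\sigma = C$'' of the two ground instances, which is exactly your semantic chaining through $S(s\sigma)$ written in one line; the content is identical.
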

\begin{proof}
(i)~The transformations can be considered sequentially, because of the imposed rule preference (Definition~\ref{def:approx:trans}). 
The monadic transformation strictly reduces the number of non-monadic atoms.
The Horn transformation strictly reduces the number of non-Horn clauses.
The Shallow transformation strictly reduces the multiset of term depths of the newly introduced clauses compared
to the removed ancestor clause.
The linear transformation strictly reduces the number of duplicate variables occurrences in positive literals. 
Hence $\apr$  terminates.

(ii) Consider a transformation $N_k\Rightarrow^*_{\text{MO}} N_{k+j}$ that exactly removes all occurrences of atoms $P(t_1,\dots,t_n)$ and
replaces those by atoms $T(f_P(t_1,\dots,t_n))$.
Then, $N_{k+j}=\Proj{P}(N_k) $ and $N_k=\R{P}(N_{k+j}) $.
Let $I$ be a model of $N_{k+j}$ and $C \in N_{k}$.  
Since $ \Proj{P}(C) \in N_{k+j}$ , $I \sat \Proj{P}(C)$ and thus, $\R{P}(I)\sat C$.
Hence, $\R{P}(I)$ is a  model of $N_k$.
Therefore, the Monadic transformation is an over-approximation.

(iii) Let $N\cup\{\Gamma \imp E_1, \dots,E_n\} \Rightarrow_{\text{HO}} N\cup\{\Gamma \imp E_i\}$.
The clause $\Gamma \imp E_i$ subsumes the clause $\Gamma \imp E_1, \dots,E_n$. Therefore, for any $I$
if $I\models \Gamma \imp E_i$  then $I\models \Gamma \imp E_1, \dots,E_n$.
Therefore, the Horn transformation is an over-approximation.

(iv) Let $N_k = N\cup\{\Gamma \imp E[s]_{p}\} \Rightarrow_{\text{SH}} N_{k+1}= N\cup\{S(x),\Gamma_1 \imp E[x]_{p}\}\cup\{ \Gamma_2 \imp S(s)\}$.
Let $I$ be a model of $N_{k+1}$ and $C\in N_k$ be a ground clause. 
If $C$ is an instance of a clause in $N$, then $I \models C$.
Otherwise $C = (\Gamma \imp E[s]_{p})\sigma$ for some ground substitution $\sigma$.
Then $S(s)\sigma,\Gamma_1\sigma \imp E[s]_{p}\sigma = (S(x),\Gamma_1 \imp E[x]_{p})\{x \mapsto s\}\sigma \in N_{k+1}$ 
and $ \Gamma_2\sigma \imp S(s)\sigma = (\Gamma_2 \imp S(s))\sigma \in N_{k+1}$.
Since $I \models N_{k+1} $, $I$ also satisfies the resolvent $\Gamma_1\sigma,\Gamma_2\sigma \imp E[s]\sigma = C$. 
Hence $I \models N_k$.
Therefore, the Shallow transformation is an over-approximation.

(v)  Let $N_k =N\cup\{\Gamma \imp E[x]_{p,q}\} \Rightarrow_{\text{LI}} N_{k+1}=N\cup\{\Gamma\{x \mapsto x'\},\Gamma \imp E[x']_q\}$.
Let $I$ be a model of $N_{k+1}$ and $C\in N_k$ be a ground clause. 
If $C$ is an instance of a clause in $N$, then $I \models C$.
Otherwise $C = (\Gamma \imp E[x]_{p,q})\sigma$ for some ground substitution $\sigma$.
Then $(\Gamma\{x \mapsto x'\},\Gamma \imp E[x']_q)\{x' \mapsto x\}\sigma \in N_{k+1}$ and
$I \models (\Gamma\{x \mapsto x'\},\Gamma \imp E[x']_q)\{x' \mapsto x\}\sigma = (\Gamma,\Gamma \imp E[x]_q)\sigma \models C $.
Hence $I \models N_k$.
Therefore, Linear transformation is an over-approximation.
\end{proof}

\begin{cor}
(i)~$\apr$ is an over-approximation.
(ii)~If $N \apr^* N'$, $P_1,\dots,P_n$ are the non-monadic predicates in $N$ and $N'$ is satisfied by model $I$, \\
then  $\R{P_1}(...(\R{P_n}(I)))$  is a model of $N$.
\end{cor}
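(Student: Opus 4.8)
The plan is to obtain both parts from Lemma~\ref{Apr_sound} by composing its per-rule statements, and, for part~(ii), by doing so while tracking exactly how the model is transformed.

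Part~(i) is essentially immediate: a single $\apr$ step is one of $\Rightarrow_{\text{MO}},\Rightarrow_{\text{HO}},\Rightarrow_{\text{SH}},\Rightarrow_{\text{LI}}$, each of which is an over-approximation by Lemma~\ref{Apr_sound}~(ii)--(v), so $\apr$ is an over-approximation; and over-approximation is closed under composition (if $N\Rightarrow N'\Rightarrow N''$ with both steps over-approximations, then satisfiability of $N''$ gives satisfiability of $N'$ and hence of $N$), so an induction on the derivation length extends this to $\apr^*$. Together with termination (Lemma~\ref{Apr_sound}~(i)) this says that every $N$ reaches, in finitely many steps, an mslH set $N'$ whose satisfiability implies that of $N$.

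For part~(ii) I would refine the composition so that it yields the explicit interpretation $\R{P_1}(\cdots(\R{P_n}(I))\cdots)$. By the rule preference of Definition~\ref{def:approx:trans} — the ``considered sequentially'' observation already used in the termination proof, together with the fact that Horn, Shallow and Linear steps never recreate a non-monadic atom or a non-Horn clause — every $\apr$-derivation splits as $N \Rightarrow^*_{\text{MO}} M \Rightarrow^*_{\text{HO},\text{SH},\text{LI}} N'$. Since Monadic steps only touch atoms of arity $>1$, $M$ contains none, and grouping the steps of the Monadic prefix by the predicate they eliminate gives $M = \Proj{P_n}(\cdots(\Proj{P_1}(N))\cdots)$, where $P_1,\dots,P_n$ are precisely the non-monadic predicates of $N$; the order here is irrelevant because the $f_{P_i}$ are distinct and fresh and $\Proj{P_i}$ (as well as $\R{P_i}$) acts as the identity on atoms not headed by $P_i$, so these operators pairwise commute. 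Reading off the proofs of Lemma~\ref{Apr_sound}~(iii)--(v), each of these three steps in fact has the stronger property that \emph{the same} interpretation that models the conclusion also models the premise; hence the given model $I$ of $N'$ is already a model of $M$. Finally I peel off the Monadic layers with Lemma~\ref{Apr_sound}~(ii): from $I\models\Proj{P_n}(\cdots(\Proj{P_1}(N))\cdots)$ we get $\R{P_n}(I)\models\Proj{P_{n-1}}(\cdots(\Proj{P_1}(N))\cdots)$, and iterating $n$ times gives $\R{P_1}(\cdots(\R{P_n}(I))\cdots)\models N$, which is the claim.

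The inductions and the re-use of the already-established per-rule arguments are routine; the only part that needs genuine care is the handling of the Monadic phase — that it may be assumed to occur first, that collapsing it per predicate turns its overall effect into the composite $\Proj{P_n}\circ\cdots\circ\Proj{P_1}$ applied to $N$ (which relies on freshness of the $f_{P_i}$ and on their commutation through the shared projection predicate $T$), and the observation that the subsequent Horn/Shallow/Linear phase preserves not merely satisfiability but the interpretation itself. With those points settled, the nested $\R{P_i}$ fall out exactly as stated.
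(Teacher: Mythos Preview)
Your argument is correct and is exactly the unpacking of the paper's one-line ``Follows from Lemma~\ref{Apr_sound} (ii)--(v)'': you make explicit the two observations that justify part~(ii), namely that the Horn/Shallow/Linear cases of Lemma~\ref{Apr_sound} actually preserve the \emph{same} model $I$, and that the Monadic prefix collapses to the composite $\Proj{P_n}\circ\cdots\circ\Proj{P_1}$ so that inverting it yields the nested $\R{P_i}$. There is nothing to add beyond what you wrote.
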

\begin{proof}
Follows from Lemma~\ref{Apr_sound} (ii)-(v).
\end{proof}

In addition to being an over-approximation, the minimal model (with respect to set inclusion)
 of the eventual approximation
generated by  $\apr$ preserves the skeleton term structure
of the original clause set, if it exists. The refinement introduced in Section~\ref{abstrref}
instantiates clauses. Thus it contributes to finding a model or a refutation.

\begin{defin}[Term Skeleton]\label{termSkel}
The term skeleton of term $t$ , $\skt(t)$, is defined as \\
(1) $\skt(x)=x'$, where $x'$ is a fresh variable \\
(2) $\skt(f(s_1,\dots,s_n))= f(\skt(s_1),\dots,\skt(s_n))$. 
\end{defin}

\begin{lem}\label{termSkelLem2}
Let $N_k$ be a monadic clause set and $N_0$ be its approximation via  $\apr$. 
Let $N_0$ be satisfiable and $I$ be a minimal model for $N_0$.
If $P(s) \in I$  and $P$ is a predicate in $N_k$, then there exists a clause 
$C = \Gamma \imp \Delta,P(t) \in N_k$ and a substitution $\sigma$ such that
$s = \skt(t)\sigma$ and for each variable $x$ and predicate $S$ with $C = S(x),\Gamma' \imp \Delta,P(t[x]_p)$, $S(s'') \in I$, where $s=s[s'']_p$.
\end{lem}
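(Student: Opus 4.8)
The plan is to exploit that $N_0$ is a monadic shallow linear \emph{Horn} clause set, so it has a \emph{unique} minimal model $I$, namely the least fixpoint of the immediate consequence operator. Every atom of $I$ therefore carries a finite derivation: if $A \in I$ then there are a clause $C_0 = \Gamma_0 \imp A' \in N_0$ and a grounding $\gamma$ with $A'\gamma = A$ and $\Gamma_0\gamma \subseteq I$, and each atom of $\Gamma_0\gamma$ lies in $I$ by a strictly shorter derivation. This well-founded structure, combined with the ancestor relation that $\apr$ records between a clause and its approximations, is what the argument runs on.

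I would prove a strengthened statement by induction on the length $m$ of the approximation $N_k = N_m \apr N_{m-1} \apr \cdots \apr N_0$: for \emph{every} clause set $N_i$ occurring in this derivation and every predicate $Q$ of $N_i$, if $Q(s) \in I$ then there are a clause $C = \Gamma \imp \Delta, Q(t) \in N_i$ and a substitution $\sigma$ with $s = \skt(t)\sigma$, such that $S(s|_p) \in I$ for every body literal $S(x)$ of $C$ and every position $p$ with $t|_p = x$. The strengthening to arbitrary $N_i$ is what makes the induction go through: the fresh monadic predicates introduced by the Shallow rule are ordinary predicates of the \emph{next} set $N_{i-1}$ in the chain, so the induction hypothesis already covers them and the recursion into auxiliary clauses closes with no extra device. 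Two auxiliary observations are used freely: since $N_k$ is monadic and Monadic has highest priority (Definition~\ref{def:approx:trans}), no Monadic step occurs, hence every step is Horn, Shallow or Linear and every $N_i$ is monadic; and a predicate occurring both in $N_0$ and in $N_i$ occurs in every $N_j$ with $i \ge j \ge 0$, so the hypothesis ``$Q$ a predicate of $N_i$'' never degrades going up.

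The base case $i = 0$ is read off the fixpoint characterisation: the producing clause $\Gamma_0 \imp Q(t_0) \in N_0$ and its grounding $\gamma$ give $s = t_0\gamma = \skt(t_0)\sigma$ (with $\sigma$ the composite of the skeleton renaming with $\gamma$), and $S(s|_p) = S(x\gamma) \in \Gamma_0\gamma \subseteq I$ whenever $S(x) \in \Gamma_0$ and $t_0|_p = x$. In the inductive step, apply the hypothesis to $N_{i-1}$ and $Q$ to obtain a clause $C'$ of $N_{i-1}$ with head predicate $Q$, then case on the first rule $N_i \apr N_{i-1}$: if $C' \in N_i$ we are done; if $C'$ was produced by Horn, its ancestor in $N_i$ has the same body and head term and the claim transfers verbatim; if $C'$ was produced by Linear from $\Gamma \imp E[x]_{p,q}$, the two head terms have the same skeleton up to renaming, and the hypothesis for $C'$, which carries both the original and the renamed copy of each body literal of $\Gamma$, delivers exactly the body condition needed for the ancestor at both positions of $x$. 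The real work is the Shallow case: the ancestor is $C = \Gamma \imp E[s_0]_{p_0} \in N_i$ (with $E = Q(u)$, $s_0$ the complex subterm abstracted) and the children are $D_1 = S_0(x_0),\Gamma_1 \imp Q(u')$ and $D_2 = \Gamma_2 \imp S_0(s_0)$, where $u'$ is $u$ with $s_0$ replaced by the fresh $x_0$ and $\Gamma = \Gamma_1 \cup \Gamma_2$. Since $S_0$ is fresh, $C'$ cannot be $D_2$, so $C' = D_1$; the hypothesis for $D_1$ yields $s = \skt(u')\sigma'$ and, taking the body literal $S_0(x_0)$ at the position $x_0$ occupies, $S_0(s|_{p_0}) \in I$. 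Applying the hypothesis once more to $N_{i-1}$ and $S_0$ at $S_0(s|_{p_0}) \in I$ returns the \emph{only} clause of $N_{i-1}$ with head predicate $S_0$, which is $D_2$; hence $s|_{p_0} = \skt(s_0)\sigma_0$ and the body condition holds for all literals of $\Gamma_2$. Gluing $\sigma'$ (off $p_0$) and $\sigma_0$ (inside $\skt(s_0)$) gives $s = \skt(u)\sigma$, since $\skt(u')$ and $\skt(u)$ agree away from $p_0$ while $\skt(u)$ carries $\skt(s_0)$ at $p_0$; and the two body conditions together exhaust the body literals of $C$ and the head positions of their variables.

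The main obstacle is exactly this Shallow case, and within it the split $\Gamma = \Gamma_1 \cup \Gamma_2$. For the body condition of $C$ to be recoverable, a body literal whose variables still occur in the retained head $u'$ must go into $\Gamma_1$ (to be inherited by $D_1$), whereas a literal all of whose variables vanish into $s_0$ must go into $\Gamma_2$ (to be inherited by $D_2$); with a split violating this, a variable of the original head that reappears in a mislocated premise can no longer be tracked and the statement fails, so the proof relies on $\apr$ being run with a split that keeps each premise on the side of the head occurrences of its variables --- the role of the ``minimise the common variables'' remark accompanying the Shallow rule. The other delicate point is that term skeletons are only defined up to renaming of their fresh variables, so the gluing of $\sigma'$ and $\sigma_0$ must be carried out with a coherent choice of $\skt(u)$ that restricts to $\skt(u')$ off $p_0$ and to $\skt(s_0)$ at $p_0$, using that the freshness of $x_0$ forces its skeleton variable to appear only at $p_0$; the Linear case needs the same kind of care for the duplicated premise. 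The remaining pieces --- the Horn case, the base case, and the persistence of predicates up the chain --- are routine.
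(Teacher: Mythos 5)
Your proof follows essentially the same route as the paper's: induction on the number of approximation steps, with the statement implicitly strengthened to every predicate of every intermediate clause set, the base case read off the minimality of $I$, a case analysis on the first rule applied, and in the Shallow case a second application of the induction hypothesis to the fresh $S$-predicate followed by gluing the two skeleton substitutions. Your remark that the argument only closes when the split $\Gamma = \Gamma_1 \cup \Gamma_2$ keeps each body literal on the side where its head occurrences of variables end up is a genuine observation that the paper's own proof passes over silently; apart from making that hypothesis explicit, the two arguments coincide.
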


\begin{proof}
By induction on $k$.\\
For the base $N_k=N_0$, assume there is no $C \in N_0$ with $C\sigma= \Gamma \imp \Delta,P(s)$ and $\Gamma \subseteq I$.
Then $I\setminus \{P(s)\}$ is still a model of $N_0$ and therefore $I$ is not minimal.\\
Let  $N = N_k\apr N_{k-1} \apr^* N_0$, $P(s) \in  I$ and $P$ is a predicate in $N_k$ and hence also in $N_{k-1}$. 
By the induction hypothesis, there exist a clause $C = \Gamma \imp \Delta,P(t) \in N_{k-1}$ and a substitution $\sigma$ such that
$s = \skt(t)\sigma$ and for each variable $x$ and predicate $S$ with $C = S(x),\Gamma' \imp \Delta,P(t[x]_p)$, $S(s'') \in I$, where $s=s[s'']_p$.
The first approximation rule application is either a Linear, a Shallow or a Horn transformation,
considered below by case analysis.

Horn Case. Let $\apr$ be a Horn transformation that replaces $\Gamma'' \imp \Delta',Q(t')$ with $\Gamma'' \imp Q(t')$.
If $C \neq \Gamma'' \imp Q(t')$, then $C \in N_k$ fulfills the claim.
Otherwise, $\Gamma'' \imp \Delta',Q(t) \in N_k$ fulfills the claim since $P=Q$ and $\Gamma' = \Gamma''$.

Linear Case. Let $\apr$ be a linear transformation that replaces $C_k = \Gamma'' \imp E[x]_{p,q}$ with $C_{k-1}=\Gamma'',\Gamma''\{x \mapsto x'\} \imp E[x']_{q}$.
If $C \neq C_{k-1}$, then $C \in N_k$ fulfills the claim.
Otherwise, $C_k=\Gamma'' \imp P(t)\{x' \mapsto x\} \in N_k$ fulfills the claim since $s = \skt(t)\sigma = \skt(t\{x' \mapsto x\})\sigma$ and $\Gamma'' \subseteq \Gamma'',\Gamma''\{x \mapsto x'\}$.

Shallow Case. Let $\apr$ be a shallow transformation that replaces $C_k = \Gamma'' \imp E[s']_{p}$ with $C_{k-1}=S(x),\Gamma_1 \imp E[x]_p$ and $C'_{k-1}=\Gamma_2 \imp S(s')$.
Since $S$ is fresh, $C \neq C'_{k-1}$.
If $C \neq C_{k-1}$, then $C \in N_k$ fulfills the claim.
Otherwise, $C = C_{k-1} = S(x),\Gamma_1 \imp P(t[x]_p)$ and hence, $s = \skt(t[x]_p)\sigma$ and $S(s'')\in I$ for $s=s[s'']_p$.
Then by the induction hypothesis, there exist a clause $C_S = \Gamma_S \imp \Delta_S,S(t_S) \in N_{k-1}$ and a substitution $\sigma_S$ such that
$s'' = \skt(t_S)\sigma_S$ and for each variable $x$ and predicate $S'$ with $C_S = S'(x),\Gamma'_S \imp \Delta_S,P(t_S[x]_q)$, $S'(s''') \in I$, where $s''=s''[s''']_q$.
By construction, $C_S = C'_{k-1}$. 
Thus,  $s''= \skt(s')\sigma_S$ and $s = \skt(t[x]_p)\sigma$ imply there exists a $\sigma''$ such that $s =  \skt(t[s']_p)\sigma"$.
Furthermore since $\Gamma_1 \cup \Gamma_2 = \Gamma''$, if $C_k= S'(x),\Gamma''' \imp P(t[s']_{p})[x]_q$, then 
either $S'(x) \in \Gamma_1$ and thus  $S'(s'''') \in I$, where $s=s[s'''']_q$, or 
$S'(x) \in \Gamma_2$ and thus  $S'(s'''') \in I$, where $s[s'']_p=(s[s'']_p)[s'''']_q$.
Hence, $C_k \in N_k$  fulfills the claim.
\end{proof}

\begin{lem}\label{termSkelLem}
Let $N$ be a clause set and $N'$ be its approximation via  $\apr$. Let $N'$ be
satisfiable and $I$ be a minimal model for $N'$.
If $P(s) \in I$ $(T(f_p(s_1,\ldots,s_n))\in I)$ 
and $P$ is a predicate in $N$, then there exist a clause 
$\Gamma \imp \Delta,P(t) \in N$  ($\Gamma \imp \Delta,P(t_1,\ldots,t_n) \in N$)
and a substitution $\sigma$ such that
$s = \skt(t)\sigma$ ($s_i = \skt(t_i)\sigma$ for all $i$).
\end{lem}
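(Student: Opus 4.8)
The plan is to reduce the statement to Lemma~\ref{termSkelLem2} by isolating the Monadic phase of the approximation. By the rule preference of Definition~\ref{def:approx:trans}, every derivation $N \apr^* N'$ factors as $N \apr^*_{\text{MO}} N_1 \apr^* N'$, where the second part uses only Horn, Shallow and Linear steps, and $N_1$ is monadic, since $\Rightarrow_{\text{MO}}$ has top priority and is applied until no non-monadic atom is left (this is exactly what makes the Monadic termination measure in Lemma~\ref{Apr_sound}(i) reach $0$). Hence $N'$ is an approximation of the monadic clause set $N_1$ via $\apr$, so Lemma~\ref{termSkelLem2} applies to $N_1$ and $N'$.

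Next I would record the precise relation between $N$ and $N_1$. As in the proof of Lemma~\ref{Apr_sound}(ii), the Monadic phase equals $\Proj{P_1}(\cdots\Proj{P_n}(\cdot)\cdots)$, where $P_1,\dots,P_n$ are the non-monadic predicates of $N$; since the $f_{P_i}$ are fresh and $T$ is the single fresh monadic predicate, this map is a clause-wise bijection between $N$ and $N_1$ with inverse $\R{P_1}(\cdots\R{P_n}(\cdot)\cdots)$, which acts on literals by replacing each $T(f_{P}(u_1,\dots,u_k))$ with $P(u_1,\dots,u_k)$ and leaving every other literal untouched. Two consequences are used below: a monadic predicate of $N$ is also a predicate of $N_1$ and is left unchanged; and every atom of $N_1$ whose predicate is $T$ has the shape $T(f_{P}(t_1,\dots,t_k))$ for some non-monadic $P$ of $N$ with $k\ge 2$ (the Monadic rule requires $n>1$, so $N_1$ contains no $T$-atom with a variable argument or with a top symbol other than some $f_{P}$).

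Now suppose $P(s)\in I$ with $P$ a predicate of $N$. Then $P$ is monadic and a predicate of $N_1$, so Lemma~\ref{termSkelLem2} yields a clause $\Gamma_1 \imp \Delta_1, P(t) \in N_1$ and a $\sigma$ with $s = \skt(t)\sigma$. Applying $\R{P_1}(\cdots\R{P_n}(\cdot)\cdots)$ leaves the literal $P(t)$ unchanged and gives a clause $\Gamma \imp \Delta, P(t) \in N$, which with $\sigma$ proves the claim. Suppose instead $T(f_p(s_1,\dots,s_n))\in I$ with $p$ a non-monadic predicate of $N$; then $T$ is a predicate of $N_1$, so Lemma~\ref{termSkelLem2} yields a clause $\Gamma_1 \imp \Delta_1, T(t) \in N_1$ and a $\sigma$ with $f_p(s_1,\dots,s_n) = \skt(t)\sigma$. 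By the second paragraph $t = f_p(t_1,\dots,t_n)$ for some terms $t_i$, and since $\skt(f_p(t_1,\dots,t_n)) = f_p(\skt(t_1),\dots,\skt(t_n))$ and $\sigma$ preserves the top symbol of a non-variable term, comparing arguments gives $s_i = \skt(t_i)\sigma$ for all $i$. Un-projecting $\Gamma_1 \imp \Delta_1, T(f_p(t_1,\dots,t_n))$ turns this literal into $p(t_1,\dots,t_n)$ and produces $\Gamma \imp \Delta, p(t_1,\dots,t_n) \in N$, finishing the argument; the side conditions from Lemma~\ref{termSkelLem2} are simply dropped, as they are not part of the conclusion here.

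The main obstacle I expect is the bookkeeping of the second paragraph: making rigorous that the Monadic phase is a clause-wise bijection whose inverse acts on literals in the obvious way, and, in particular, that every $T$-atom occurring in $N_1$ has the form $T(f_P(\cdot))$ with at least two arguments. This last point is what lets $\skt$ preserve the top symbol $f_p$, and hence lets us both identify the correct non-monadic predicate $p$ in $N$ and read off the individual arguments $s_i = \skt(t_i)\sigma$; the steps that invoke Lemma~\ref{termSkelLem2} and translate clauses across the Monadic phase are routine.
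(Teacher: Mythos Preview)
Your proposal is correct and follows essentially the same route as the paper: both isolate the Monadic phase to obtain a monadic intermediate set (your $N_1$ is the paper's $N_{MO}=\Proj{P_1}(\ldots(\Proj{P_n}(N))\ldots)$), apply Lemma~\ref{termSkelLem2} there, and then un-project via $\R{P_1}(\ldots(\R{P_n}(\cdot))\ldots)$ to pull the witnessing clause back into $N$. If anything, your treatment is slightly more explicit than the paper's about why the $T$-atom returned by Lemma~\ref{termSkelLem2} must have top symbol $f_p$ (the paper simply asserts ``Therefore, $t=f_p(t_1,\ldots,t_n)$''), but the argument is the same.
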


\begin{proof}
Let $P_1,\ldots,P_n$ be the non-monadic predicates in $N$ and $N_{MO}=\Proj{P_1}(\ldots(\Proj{P_n}(N)))$. Then, $N_{MO}$ is monadic and also has $N'$ as its approximation via $\apr$.

 Let $P(s) \in I$ and $P$ is a predicate in $N$. Since $P$ is monadic, $P$ is a predicate in $N_{MO}$.
Hence by Lemma ~\ref{termSkelLem2}, there exists a clause $\Gamma \imp \Delta,P(t) \in N_{MO}$ 
and a substitution $\sigma$ such that $s = \skt(t)\sigma$. 
Then, $\R{P_1}(\ldots(\R{P_n}(\Gamma \imp \Delta,P(t)))\ldots)= \R{P_1}(\ldots(\R{P_n}(\Gamma)\ldots) \imp \R{P_1}(\ldots(\R{P_n}(\Delta)\ldots),P(t)\in N$ fulfills the claim.

Let $T(f_p(s_1,\ldots,s_n))\in I$ and $P$ is a predicate in $N$. $T$ is monadic and a predicate in $N_{MO}$.
Hence by Lemma ~\ref{termSkelLem2}, there exists a clause $\Gamma \imp \Delta,T(t) \in N_{MO}$ 
and a substitution $\sigma$ such that $f_p(s_1,\ldots,s_n) = \skt(t)\sigma$.
Therefore, $t=f_p(t_1,\ldots,t_n)$ with  $s_i = \skt(t_i)\sigma$ for all $i$.
Then, $\R{P_1}(\ldots(\R{P_n}(\Gamma \imp \Delta,T(f_p(t_1,\ldots,t_n))))\ldots)=\R{P_1}(\ldots(\R{P_n}(\Gamma)\ldots) \imp \R{P_1}(\ldots(\R{P_n}(\Delta)\ldots),P(t_1,\ldots,t_n) \in N$ fulfills the claim.
\end{proof}

The above lemma also holds if satisfiability of $N'$ is dropped and $I$ is replaced
by the superposition partial minimal model operator~\cite{Weidenbach01handbook}.

\section{Lifting the Conflicting Core} \label{sec:lifting}

Given a monadic, linear, shallow, Horn approximation $N_k$ of $N$ 
and a conflicting core $N^\bot_k$ of $N_k$, using the transformations
provided in this section we attempt to lift $N^\bot_k$ to a conflicting
core $N^\bot$ of $N$.
In case of success this shows the unsatisfiability of $N$.
In case an approximation step cannot be lifted the original clause
set is refined by instantiation, explained in the next section.

Let $N_k$ be an unsatisfiable monadic, linear, shallow, Horn approximation. 
Since $N_k$ belongs to a decidable first-order fragment,
we expect an appropriate decision procedure to generate a proof of unsatisfiability for $N_k$, 
e.g., ordered resolution with selection \cite{Weidenbach99cade}.
A conflicting core can be straightforwardly generated out of a resolution refutation 
by applying the substitutions of the proof to the used input clauses.

Starting with a resolution refutation, in order to construct the conflicting core, 
we begin with the singleton set containing the pair of empty clause and the empty substitution.
Furthermore, we assume that all input clauses from $N_k$ used in the refutation are
variable disjoint. 
Then we recursively choose a pair $(C,\sigma)$ from the set where $C \notin N_k$.
There exists a step in the refutation that generated this clause. 
In the case of a resolution inference, there are two parent 
clauses $C_1$ and $C_2$ in the refutation and two substitutions $\sigma_1$ and $\sigma_2$
such that $C$ is the resolvent of $C_1\sigma_1$ and $C_2\sigma_2$.
In the case of a factoring inference, there is one parent 
clause $C'$ in the refutation and a substitution $\sigma'$
such that $C$ is the factor of $C'\sigma'$.  
Replace  $(C,\sigma)$ by $(C_1,\sigma_1\sigma)$ and $(C_2,\sigma_2\sigma)$ or 
by $(C',\sigma'\sigma)$ respectively.
The procedure terminates in linear time in the size of the refutation.
For each pair $(C,\sigma)$, collect the clause $C\sigma$, resulting
in a conflicting core $N^\bot_k$ of $N_k$.

\begin{exmp}\label{coreex}
Let $N = \{ P(x,x');$ $P(y,a),P(z,b) \imp \}$ with signature $\Sigma = a/0, b/0$. 
$N$ is unsatisfiable and  a possible resolution refutation is resolving $P(b,a)$ 
and $P(a,b)$ with $P(b,a),P(a,b)\imp$. From this we get the conflicting core \\
$N^\bot_{ba}=\{ P(b,a);P(a,b);P(b,a),P(a,b)\imp\}$.

An alternative refutation is to resolve  $P(x,x')$ and $P(y,a),P(z,b) \imp$ with 
substitution $\{x\mapsto y;x' \mapsto a\}$ 
and then the resolvent and $P(x,x')$ with substitution $\{x\mapsto z;x' \mapsto b\}$. 
From this refutation we construct the conflicting core  
$N^\bot_{yz}=\{ P(y,a);$ $P(z,b);$ $P(y,a),P(z,b)\imp\}$.

\end{exmp} 

Note that in Example \ref{coreex}  $N^\bot_{yz}$ is more general than $N^\bot_{ba}$ since $N^\bot_{yz}\{y \mapsto b; z \mapsto a\} = N^\bot_{ba}$.
A conflicting core is minimal in that it represents the most general clauses corresponding to the refutation from that it is generated.

\subsubsection{Lifting the Monadic Transformation.}
Since the Monadic transformation is satisfiability preserving, lifting always succeeds by replacing any $T(f_P(t_1,\dots,t_n))$ atoms in the core  by $P(t_1,\dots,t_n)$.

\begin{exmp}\label{LiftMonadicEx}
Let $N_0 = \{ P(x,x');$ $P(y,a),P(z,b) \imp \}$. 
Then $N_k=  \{ T(f_P(x,x'));$ $T(f_P(y,a)),T(f_P(z,b)) \imp \}$ is a Monadic transformation of $N_0$
and a conflicting core is $N^\bot_k= \{ T(f_P(y,a));$ $T(f_P(z,b));$ $T(f_P(y,a)),T(f_P(z,b)) \imp \}$.
Reverting the atoms in  $N^\bot_k$ gives $N^\bot=\{ P(y,a);$ $P(z,b);$ $P(y,a),P(z,b)\imp\}$ a conflicting core of $N_0$.
\end{exmp} 


\begin{lem}[Lifting the Monadic Transformation]\label{liftmonadic}
Let $N_k\Rightarrow^*_{\text{MO}} N_{k+l}$ be the transformation  that exactly removes all occurrences of atoms $P(t_1,\dots,t_n)$ and replaces those by atoms $T(f_P(t_1,\dots,t_n))$.
If $N^\bot_{k+l}$ is a conflicting core for $N_{k+l}$ then there is a conflicting core $N^\bot_k$ of $N_k$.
\end{lem}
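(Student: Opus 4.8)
The plan is to reverse the Monadic projection $\mu_P$ on the conflicting core, exactly mirroring the soundness argument for the Monadic transformation in Lemma~\ref{Apr_sound}(ii), but now at the level of conflicting cores rather than full clause sets. Concretely, given the conflicting core $N^\bot_{k+l}$ of $N_{k+l}$, I would define $N^\bot_k := \R{P}(N^\bot_{k+l})$, i.e. replace every atom $T(f_P(t_1,\dots,t_n))$ occurring in a clause of $N^\bot_{k+l}$ by $P(t_1,\dots,t_n)$, leaving all other atoms untouched. I then must verify the two clauses of Definition~\ref{conlfictCore}: that $N^\bot_k$ is a conflicting core (every ground instance is unsatisfiable), and that it is a conflicting core \emph{of} $N_k$ (each clause is an instance of a clause of $N_k$).

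For the second, easier part: each $C \in N^\bot_{k+l}$ satisfies $C = C'\rho$ for some $C' \in N_{k+l}$. Since $N_{k+l} = \Proj{P}(N_k)$ by the argument in Lemma~\ref{Apr_sound}(ii), we have $C' = \Proj{P}(D)$ for some $D \in N_k$, and because $\mu_P$ and substitution commute on atoms (the fresh projection function $f_P$ introduces no new variables), $\R{P}(C) = \R{P}(\Proj{P}(D)\rho) = D\rho$, so $\R{P}(C)$ is indeed an instance of $D \in N_k$. For the first, harder part: I must show that for every grounding $\tau$, $N^\bot_k\tau$ is unsatisfiable. Suppose for contradiction $I \sat N^\bot_k\tau$ for some Herbrand interpretation $I$. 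Then $\Proj{P}(I) \sat \Proj{P}(N^\bot_k\tau)$; and since $\mu_P$ commutes with grounding substitutions, $\Proj{P}(N^\bot_k\tau) = \Proj{P}(N^\bot_k)\tau = N^\bot_{k+l}\tau$ (using $\Proj{P}(\R{P}(N^\bot_{k+l})) = N^\bot_{k+l}$, which holds because every non-monadic atom in $N^\bot_{k+l}$ has predicate symbol obtained from $P$ via $f_P$, so projecting back and forth is the identity). Hence $\Proj{P}(I) \sat N^\bot_{k+l}\tau$, contradicting that $N^\bot_{k+l}$ is a conflicting core.

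The main obstacle, and the only place requiring genuine care, is the bookkeeping around which atoms $\mu_P$ and $\mu^{-1}_P$ actually act on. One must be sure that $N^\bot_{k+l}$ contains no ``spurious'' atom of the form $T(f_P(\dots))$ that did not arise from projecting some $P$-atom of $N_k$ — but this follows from $N^\bot_{k+l}$ being a conflicting core \emph{of} $N_{k+l}$: every clause is an instance of a clause in $N_{k+l} = \Proj{P}(N_k)$, and $f_P$ is fresh, so no clause of $N_{k+l}$ contains $f_P$ except inside a $T(f_P(\dots))$ produced by the transformation; instantiation cannot create new occurrences of $f_P$ either. The same freshness guarantees $\Proj{P} \circ \R{P}$ is the identity on $N^\bot_{k+l}$ and that $\R{P} \circ \Proj{P}$ is the identity on $N_k$. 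Once these identities and the commutation of $\mu_P, \mu^{-1}_P$ with substitution are pinned down, both verifications are immediate; I would state these as small observations before the main argument to keep the proof clean.
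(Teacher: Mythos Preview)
Your proof is correct, but it takes a substantially more detailed route than the paper's. The paper's proof is a single sentence: since the Monadic transformation is satisfiability preserving (both directions, as established in Lemma~\ref{Apr_sound}(ii)), the existence of a conflicting core for $N_{k+l}$ means $N_{k+l}$ is unsatisfiable, hence $N_k$ is unsatisfiable, hence $N_k$ has \emph{some} conflicting core. The paper does not construct $N^\bot_k$ from $N^\bot_{k+l}$ at all; it simply appeals to existence. Your approach, by contrast, explicitly sets $N^\bot_k := \R{P}(N^\bot_{k+l})$ and verifies Definition~\ref{conlfictCore} directly. This constructive argument is what the paper actually \emph{uses} operationally (see Example~\ref{LiftMonadicEx}, and the Monadic case in the proof of Lemma~\ref{complete}, where exactly your $\R{P}$ construction appears), so your version is arguably the more informative one. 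One small wording issue: your claim that ``every non-monadic atom in $N^\bot_{k+l}$ has predicate symbol obtained from $P$ via $f_P$'' is not literally true---other non-monadic predicates may still be present since only $P$ is being transformed---but this does not affect the argument, since $\mu_P$ and $\R{P}$ leave such atoms untouched; the real reason $\Proj{P}\circ\R{P}$ is the identity on $N^\bot_{k+l}$ is simply that $N^\bot_{k+l}$ contains no $P$-atoms.
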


\begin{proof}
Since the Monadic transformation is satisfiability preserving, unsatisfiability of $N_{k+l}$ directly implies unsatisfiability of $N_{k}$ and the existence of a conflicting core of $N_{k}$.
\end{proof}

\subsubsection{Lifting the Horn Transformation.}
For a Horn transformation there are two ways for lifting.
The first, directly lifting the core, only succeeds in special cases, where the original clause and its approximation are  equivalent 
for the instantiations appearing in the core.

\begin{exmp}\label{LiftHornEx}
Let $N_0 = \{ P(a,b) \imp;$ $ P(x,b),P(a,y) \}$. 
Then $N_k=  \{ P(a,b) \imp;$ $ P(x,b) \}$ is a Horn transformation of $N_0$
and a conflicting core is $N^\bot_k= \{ P(a,b) \imp;$ $ P(a,b) \}$.
By substituting $y$ with $b$,  $N^\bot_k$ lifts to $N^\bot=\{  P(a,b) \imp;$ $ P(a,b) , P(a,b) \}$ a conflicting core of $N_0$.
\end{exmp} 

\begin{lem}[Lifting the Horn Transformation (direct)]\label{lifthorn1}
Let $N_k\Rightarrow_{\text{HO}} N_{k+1}$ where 
$N_k = N\cup\{\Gamma \imp E_1, \dots,E_n\}$ and $N_{k+1} = N\cup\{\Gamma \imp E_i\}$.
Let $N^\bot_{k+1}$ be a conflicting core of $N_{k+1}$.
If for all $(\Gamma \imp E_i)\sigma_j\in N^\bot_{k+1}$, $1\leq j\leq m$ there is a 
substitution $\sigma'_j$ such that
$N_k^j\tau_j\models  (\Gamma \imp E_1, \dots,E_n)\sigma'_j\rightarrow (\Gamma \imp E_i)\sigma_j$,
such that $N_k^j\subseteq N_k$ and 
$N_k^j\tau_j \cup \{(\Gamma \imp E_1, \dots,E_n)\sigma'_j, \neg(\Gamma \imp E_i)\sigma_j\}$
is a conflicting core, then
$N^\bot_{k+1}\setminus \{(\Gamma \imp E_i)\sigma_j \mid 1\leq j\leq m\} \cup \{(\Gamma \imp E_1, \dots,E_n)\sigma'_j\mid 1\leq j\leq m\}\cup \bigcup\limits_j N_k^j\tau_j$ is
a conflicting core of $N_k$.
\end{lem}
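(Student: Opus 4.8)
The plan is to show that the proposed clause set is a conflicting core of $N_k$ by verifying the two defining conditions of Definition~\ref{conlfictCore}: every clause in it is an instance of a clause in $N_k$, and every ground instance of the whole set is unsatisfiable.

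First I would check membership. The clauses coming from $N^\bot_{k+1}\setminus\{(\Gamma\imp E_i)\sigma_j\}$ are, by assumption, instances of clauses in $N_{k+1}$; each such clause is either already in $N$ (hence in $N_k$) or is $(\Gamma\imp E_i)\tau$ for some $\tau$, but since we have removed all the $(\Gamma\imp E_i)\sigma_j$ we must argue that the remaining ones are handled — actually the cleanest route is to take $m$ to enumerate \emph{all} occurrences of instances of $\Gamma\imp E_i$ in $N^\bot_{k+1}$, so that $N^\bot_{k+1}\setminus\{(\Gamma\imp E_i)\sigma_j\mid 1\le j\le m\}\subseteq N$. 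The clauses $(\Gamma\imp E_1,\dots,E_n)\sigma'_j$ are instances of the clause $\Gamma\imp E_1,\dots,E_n\in N_k$. The clauses in $\bigcup_j N_k^j\tau_j$ are instances of clauses in $N_k^j\subseteq N_k$. Hence all clauses are instances of clauses of $N_k$, giving the ``conflicting core of $N_k$'' part once unsatisfiability under instantiation is established.

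Next I would establish unsatisfiability under an arbitrary grounding substitution $\rho$. Apply $\rho$ to the whole candidate set. Since $N^\bot_{k+1}$ is a conflicting core, $N^\bot_{k+1}\rho$ is unsatisfiable; I want to simulate any model of the candidate set into a model of $N^\bot_{k+1}\rho$ to derive a contradiction. So suppose $I$ is a Herbrand model of $\big(N^\bot_{k+1}\setminus\{\ldots\}\cup\{(\Gamma\imp E_1,\dots,E_n)\sigma'_j\}\cup\bigcup_j N_k^j\tau_j\big)\rho$. For each $j$, $I$ satisfies $N_k^j\tau_j\rho$ and $(\Gamma\imp E_1,\dots,E_n)\sigma'_j\rho$; by the hypothesis $N_k^j\tau_j\models (\Gamma\imp E_1,\dots,E_n)\sigma'_j\rightarrow(\Gamma\imp E_i)\sigma_j$ (noting this entailment is preserved under the grounding $\rho$ since the variables are shared/rigid, which is exactly the point of the auxiliary conflicting-core condition $N_k^j\tau_j\cup\{(\Gamma\imp E_1,\dots,E_n)\sigma'_j,\neg(\Gamma\imp E_i)\sigma_j\}$ being a conflicting core), $I$ must satisfy $(\Gamma\imp E_i)\sigma_j\rho$. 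Therefore $I$ satisfies every clause of $N^\bot_{k+1}\rho$: the ones carried over directly, and the recovered $(\Gamma\imp E_i)\sigma_j\rho$. This contradicts unsatisfiability of $N^\bot_{k+1}\rho$. Hence no such $I$ exists and the candidate set's grounding is unsatisfiable; as $\rho$ was arbitrary, the candidate set is a conflicting core.

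The main obstacle I expect is the careful handling of the rigidity of variables across the entailment hypothesis. The statement $N_k^j\tau_j\models(\Gamma\imp E_1,\dots,E_n)\sigma'_j\rightarrow(\Gamma\imp E_i)\sigma_j$ is a first-order entailment, but in a conflicting core variables are shared among clauses and treated rigidly, so when we instantiate the whole candidate set by $\rho$ we must make sure the same $\rho$ is used consistently in $N_k^j\tau_j\rho$, $(\Gamma\imp E_1,\dots,E_n)\sigma'_j\rho$ and $(\Gamma\imp E_i)\sigma_j\rho$; this is precisely why the hypothesis demands not merely the entailment but that $N_k^j\tau_j\cup\{(\Gamma\imp E_1,\dots,E_n)\sigma'_j,\neg(\Gamma\imp E_i)\sigma_j\}$ be a conflicting core, i.e.\ that the entailment survives every grounding. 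Once this is recognized, the argument is a routine model-transfer. A secondary bookkeeping point is to let the index set $\{1,\dots,m\}$ range over all instances of $\Gamma\imp E_i$ occurring in $N^\bot_{k+1}$ so that the set difference leaves only clauses from $N$, which makes the membership check immediate.
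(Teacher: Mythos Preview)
Your proposal is correct and follows essentially the same approach as the paper's proof: fix a grounding substitution, use the hypothesis to recover the removed $(\Gamma\imp E_i)\sigma_j$ clauses, conclude that the candidate set grounded entails $N^\bot_{k+1}$ grounded, and transfer unsatisfiability. You are more explicit than the paper about the membership check and about why the auxiliary conflicting-core condition is what makes the entailment survive arbitrary groundings, but the underlying argument is the same.
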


\begin{proof}
Let  $\sigma$ be a grounding substitution for $N^\bot_{k}$ and $N^\bot_{k+1}$. 
Since $N_k\models  (\Gamma \imp E_1, \dots,E_n)\sigma'_j\rightarrow (\Gamma \imp E_i)\sigma_j$,
$N^\bot_{k}\sigma \models N^\bot_{k}\sigma \cup \{(\Gamma \imp E_i)\sigma_j \mid 1\leq j\leq m\}\sigma \models N^\bot_{k+1}\sigma$.
Hence, $N^\bot_{k}\sigma$ is unsatisfiable because $N^\bot_{k+1}\sigma$ is unsatisfiable.
Therefore, $N^\bot_{k}$ is an conflicting core of $N_k$.
\end{proof}

Of course, the condition $N_k^j\tau_j\models  (\Gamma \imp E_1, \dots,E_n)\sigma'_j\rightarrow (\Gamma \imp E_i)\sigma_j$ itself is undecidable, in general. The above lemma is meant to be a justification for
the cases where this relation can be decided, e.g, by reduction. In general, the next lemma applies.
We assume any non-Horn clauses have exactly two positive literals. 
Otherwise, we would have first redefined pairs of positive literals using fresh predicates. 
Further assume w.l.o.g. that Horn transformation always chooses the first positive Literal of a non-Horn clause.

The indirect method uses the information from the conflicting core to replace the non-Horn clause with a satisfiable equivalent unit clause,
which is then  solved recursively. Since this unit clause is already Horn, we lifted
one Horn approximation step.

\begin{exmp}
Let $N_k = \{ P(a),Q(a);  P(x)\imp\}$. 
The Horn transformation $N_k=\{  P(a); P(x) \imp\}$ has a conflicting core 
$N^\bot_k= \{  P(a); P(a) \imp\}$.
$N^\bot_k$ abstracts a resolution refutation with $\bot$ as the result.
If we replace  $ P(a)$ with $P(a),Q(a)$ in such a refutation,  the result will be $Q(a)$ instead and hence $N_k \sat Q(a)$ 
Since $Q(a)$ subsumes $P(a),Q(a)$, \\ $N_k$ is satisfiable if $N'_k=\{ Q(a);  P(x)\imp\}$ is too.
\end{exmp}

\begin{lem}[Lifting the Horn Transformation (indirect)]\label{lifthorn2}
Let $N$ be a set of variable disjoint clauses, 
$N\apr^* N_k \Rightarrow_{\text{HO}} N_{k+1}$,
$N_k = N\cup\{\Gamma \imp E_1,E_2\}$ and $N_{k+1} = N\cup\{\Gamma \imp E_1\}$ and
$N^\bot_{k+1}$ be a conflicting core of $N_{k+1}$ where Lemma~\ref{lifthorn1} does not apply.
Let $(\Gamma \imp E_1)\sigma \in N^\bot_{k+1}$ 
, where $\sigma$ is a variable renaming and
$N_k^j\tau_j\not \models  (\Gamma \imp E_1,E_2)\sigma'_j\rightarrow (\Gamma \imp E_1)\sigma$ for any $N_k^j\subseteq N_k$,$\tau_j$ and $\sigma'_j$. 
If there exists a conflicting core $N^\bot$ of $N\cup\{E_2\}$, then a  conflicting core of $N_k$ exists.
\end{lem}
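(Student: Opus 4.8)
The plan is to show that the hypothesized conflicting core $N^\bot$ of $N\cup\{E_2\}$ can be converted into a conflicting core of $N_k = N\cup\{\Gamma\imp E_1,E_2\}$. The intuition, mirroring the preceding example, is the following: a refutation witnessed by $N^\bot$ uses the unit clause $E_2$ in place of the non-Horn clause $\Gamma\imp E_1,E_2$. Since $(\Gamma\imp E_1)\sigma\in N^\bot_{k+1}$ with $\sigma$ merely a renaming, and since Lemma~\ref{lifthorn1} provably does \emph{not} apply, the literal $E_1\sigma$ cannot be ``resolved away'' using only clauses of $N_k$; informally this means $N_k\cup\{\neg E_2\sigma''\}$ entails $E_1$ is not derivable, so $E_2$ behaves, relative to $N^\bot$, like the genuine consequence of $\Gamma\imp E_1,E_2$ once its first disjunct is blocked. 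Concretely, I would take the clause $(\Gamma\imp E_1,E_2)$ and the appropriate instance $\sigma'$ so that $(\Gamma\imp E_1,E_2)\sigma'$ together with the clauses of $N^\bot$ (and the supporting instances $N_k^j\tau_j$ of $N_k$ clauses that would otherwise cover $E_1\sigma$) forms the candidate conflicting core $N^\bot_k := (N^\bot \setminus\{E_2\text{-instances}\}) \cup \{(\Gamma\imp E_1,E_2)\sigma'\}\cup(\text{supporting }N_k\text{ instances})$ of $N_k$.

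The key steps, in order, are: (1) fix a grounding substitution $\rho$ for the candidate core and assume for contradiction that there is a Herbrand model $I\sat N^\bot_k\rho$; (2) observe that $I$ must satisfy $(\Gamma\imp E_1,E_2)\sigma'\rho$, so either $I\not\sat\Gamma\sigma'\rho$, or $I\sat E_1\sigma'\rho$, or $I\sat E_2\sigma'\rho$; (3) in the case $I\sat E_2\sigma'\rho$ (suitably instantiated to match the $E_2$ occurring in $N^\bot$), derive a contradiction directly, because then $I$ extended/restricted to the clauses of $N^\bot$ is a model of $N^\bot$ with the unit $E_2$ satisfied, contradicting that $N^\bot$ is a conflicting core of $N\cup\{E_2\}$ (here one uses that the $N$-part of $N^\bot_k$ coincides with the $N$-part of the core of $N\cup\{E_2\}$, by variable-disjointness and the ancestor bookkeeping); (4) in the remaining cases $I\not\sat\Gamma\sigma'\rho$ or $I\sat E_1\sigma'\rho$, use the failure of Lemma~\ref{lifthorn1} — i.e. $N_k^j\tau_j\not\models(\Gamma\imp E_1,E_2)\sigma'_j\imp(\Gamma\imp E_1)\sigma$ for \emph{all} choices — to show that $I$ then actually satisfies the old conflicting core $N^\bot_{k+1}\rho$ as well, by reading the $E_1\sigma$-unit of $N^\bot_{k+1}$ off of the satisfied clause $(\Gamma\imp E_1,E_2)\sigma'\rho$; this contradicts that $N^\bot_{k+1}$ is a conflicting core of $N_{k+1}$.

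The main obstacle I anticipate is step (4): correctly exploiting the negative hypothesis ``Lemma~\ref{lifthorn1} does not apply'' to guarantee that no model $I$ of the candidate core can sneak through by falsifying $\Gamma\sigma'\rho$ or by satisfying $E_1\sigma'\rho$. One must show that precisely the impossibility of the implication $(\Gamma\imp E_1,E_2)\sigma'\imp(\Gamma\imp E_1)\sigma$ being forced by $N_k$-instances is what lets the $E_2$-disjunct carry the refutation; getting the instantiation/renaming bookkeeping between $\sigma$, $\sigma'$, the renaming that makes $(\Gamma\imp E_1)\sigma\in N^\bot_{k+1}$, and the grounding $\rho$ exactly right will be the delicate part. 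A secondary subtlety is verifying that the $N$-fragment of the core of $N\cup\{E_2\}$ can be identified with the corresponding fragment inside $N_k$ so that the recursion in step (3) is legitimate; this relies on $N$ being variable-disjoint and on the ancestor relation established by $\apr$, both of which are available by hypothesis. Once these alignment issues are pinned down, assembling $N^\bot_k$ and checking finiteness and the instance condition of Definition~\ref{conlfictCore} is routine.
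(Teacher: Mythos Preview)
Your plan has a genuine gap in step~(4) and, relatedly, in the construction of the candidate core. You build $N^\bot_k$ from $N^\bot$ together with a single instance $(\Gamma\imp E_1,E_2)\sigma'$ and some ``supporting $N_k$ instances'' that you tie to the witnesses $N_k^j\tau_j$ of Lemma~\ref{lifthorn1}. But the hypothesis says precisely that no such witnesses exist, so these supporting instances are vacuous; your candidate core therefore contains nothing from $N^\bot_{k+1}$ beyond what accidentally overlaps with $N^\bot$. Consequently, in case~(4) you cannot conclude $I\sat N^\bot_{k+1}\rho$: you know at most that $I\sat(\Gamma\imp E_1)\sigma\rho$, while the remaining clauses of $N^\bot_{k+1}$ are simply absent from your set. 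The failure of Lemma~\ref{lifthorn1} is a red herring here; the paper's argument does not exploit it at all. A second gap is that $N^\bot$ may contain several instances $E_2\sigma_1,\ldots,E_2\sigma_m$, each requiring its own replacement, not a single $(\Gamma\imp E_1,E_2)\sigma'$.

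The missing idea is to plug in, for \emph{each} instance $E_2\sigma_j\in N^\bot$, a full instantiated copy $N^{E_2}_k\sigma_j$ where $N^{E_2}_k := \bigl(N^\bot_{k+1}\setminus\{(\Gamma\imp E_1)\sigma\}\bigr)\cup\{(\Gamma\imp E_1,E_2)\sigma\}$. The paper justifies this proof-theoretically: the refutation encoded by $N^\bot_{k+1}$, once $(\Gamma\imp E_1)\sigma$ is replaced by $(\Gamma\imp E_1,E_2)\sigma$, derives $E_2\sigma$ instead of $\bot$; thus each $N^{E_2}_k\sigma_j$ entails $E_2\sigma\sigma_j$. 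Semantically this is exactly what makes your case split work: under any grounding $\rho$, a model of $N^{E_2}_k\sigma_j\rho$ either satisfies $E_2\sigma\sigma_j\rho$, or it satisfies $(\Gamma\imp E_1)\sigma\sigma_j\rho$ and hence all of the unsatisfiable set $N^\bot_{k+1}\sigma_j\rho$. With this construction the argument reduces to your case~(3), and the contradiction with $N^\bot$ being a conflicting core of $N\cup\{E_2\}$ follows directly. The conflicting core of $N_k$ is then $\bigl(N^\bot\setminus\{E_2\sigma_j\mid 1\le j\le m\}\bigr)\cup\bigcup_j N^{E_2}_k\sigma_j$.
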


\begin{proof}
From the conflicting core $N^\bot_{k+1}$, we can conclude that there exists an unsatisfiability proof of $N_{k+1}$ which derives $\bot$ and uses  $(\Gamma \imp E_1)\sigma$ as the only instance of $\Gamma \imp E_1$. If we were to replace $(\Gamma \imp E_1)\sigma$ by  $(\Gamma \imp E_1,E_2)\sigma$, the unsatisfiability proof's root clause would instead be $E_2\sigma$.
Hence, we know that $N_k\models N_k \cup \{E_2\sigma\}$. 
Furthermore, $N_k\models  N \cup \{E_2\sigma\}$ since $E_2\sigma$ subsumes $\Gamma \imp E_1,E_2$.

Let $E_2\sigma_j \in N^\bot$ for $1 \leq j \leq m$ and $N^{E_2}_k= N^\bot_{k+1} \setminus \{(\Gamma \imp E_1)\sigma\} \cup \{(\Gamma \imp E_1,E_2)\sigma\}$
Then  $N^\bot \setminus \{E_2\sigma_j \mid 1 \leq j \leq m\} \bigcup\limits_j N^{E_2}_{k}\sigma_j$ is a conflict core of  $N_k$.
\end{proof}

Note that $N_{k}$ now again contains the Non-Horn clause $\Gamma \imp E_1,E_2$. 
Then, in a following indirect Horn lifting step $\Gamma \imp E_1,E_2$ can not necessarily be again replaced by  $E_2\sigma$.
Hence, the indirect Horn lifting needs to be repeated.

\subsubsection{Lifting the Shallow Transformation.}
A Shallow transformation introduces a new predicate $S$, which is removed in the lifting step.
We take all clauses with $S$-atoms in the conflicting core and generate any 
possible resolutions on $S$-atoms. 
The resolvents, which don't contain $S$-atoms anymore, then replace their parent clauses in the core.
Lifting succeeds if all introduced resolvents are instances of clauses before the shallow
transformation.

\begin{exmp}\label{LiftShallowEx}
Let $N_0 = \{ P(x), Q(y) \imp R(x,f(y)); P(a);Q(b); R(a,f(b))\imp \}$.
Then $N_k = \{ S(x'),P(x) \imp R(x,x'); Q(y) \imp S(f(y)); P(a);  Q(b); $ $ R(a,f(b)) \imp \}$ is a Shallow transformation of $N_0$
and  a conflicting core is $N^\bot_k= S(f(b)),$ $P(a) \imp R(a,f(b));Q(b) \imp S(f(b));P(a);  Q(b);  R(a,f(b)) \imp$.
By replacing \\
$S(f(b)),P(a) \imp R(a,f(b))$ and $Q(b) \imp S(f(b))$ with the resolvent, $N^\bot_k$ lifts to $N^\bot =\{ P(a), Q(b) \imp R(a,f(b)); P(a);Q(b);  R(a,f(b)) \imp\}$ a conflicting core of $N_0$.
\end{exmp} 

\begin{lem}[Lifting the Shallow Transformation]\label{liftshallow}
Let $N_k\Rightarrow_{\text{SH}} N_{k+1}$ where $N_k = N\cup\{\Gamma \imp E[s]_{p}\}$ and $N_{k+1} = N\cup\{S(x),\Gamma_1 \imp E[x]_{p}\}\cup\{ \Gamma_2 \imp S(s)\}$.
Let $N^\bot_{k+1}$ be a conflicting core of $N_{k+1}$. 
Let $N_S$ be the set of all resolvents from  clauses from $N^\bot_{k+1}$ on the $S$ literal.
If for all clauses $C_j\in N_S$, $1\leq j\leq m$ there is a substitution $\sigma_j$ such that $C_j = (\Gamma \imp E[s]_{p})\sigma_j$ then
$N^\bot_{k+1}\setminus\{C\mid C\in N^\bot_{k+1}\text{ and contains an }S\text{-atom}\}\cup\{(\Gamma \imp E[s]_{p})\sigma_j\mid 1\leq j\leq m\}$ is a conflicting core of $N_k$.
\end{lem}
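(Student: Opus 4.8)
The plan is to show that the candidate clause set $N^\bot_k := N^\bot_{k+1}\setminus\{C\mid C\in N^\bot_{k+1}\text{ and contains an }S\text{-atom}\}\cup\{(\Gamma \imp E[s]_{p})\sigma_j\mid 1\leq j\leq m\}$ is a conflicting core of $N_k$ by verifying the two defining conditions of Definition~\ref{conlfictCore}: first that every clause in $N^\bot_k$ is an instance of a clause in $N_k$, and second that $N^\bot_k\tau$ is unsatisfiable for every grounding substitution $\tau$. The first condition is the easy half: clauses coming from $N^\bot_{k+1}$ without an $S$-atom must already be instances of clauses of $N$ (the fresh predicate $S$ cannot occur in $N$, so such clauses are not instances of $S(x),\Gamma_1\imp E[x]_p$ nor of $\Gamma_2\imp S(s)$, hence they are instances of clauses in $N\subseteq N_k$), and the newly added clauses $(\Gamma\imp E[s]_p)\sigma_j$ are by construction instances of the clause $\Gamma\imp E[s]_p\in N_k$.

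For the unsatisfiability condition, I would fix an arbitrary grounding substitution $\tau$ for $N^\bot_k$ and extend it to a grounding substitution for $N^\bot_{k+1}$ (treating the variables of the $S$-clauses, which are rigid across the core, consistently). The key step is a model-theoretic argument: assume for contradiction that $N^\bot_k\tau$ has a model $J$. I want to extend $J$ to a model $J'$ of $N^\bot_{k+1}\tau$, contradicting that $N^\bot_{k+1}$ is a conflicting core. The natural choice is to set $J' := J \cup \{S(u) \mid u \text{ a ground instance of } s\sigma_i\tau \text{ for some } i \text{ such that } J\models \Gamma_2\sigma_i\tau\}$ — more precisely, put $S(s\sigma_i\tau)$ into $J'$ exactly when the body $\Gamma_2\sigma_i\tau$ holds in $J$ — so that every ground instance of $\Gamma_2\imp S(s)$ appearing in $N^\bot_{k+1}\tau$ is satisfied by construction. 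Then one checks the two remaining obligations: (a) every ground instance of $S(x),\Gamma_1\imp E[x]_p$ in $N^\bot_{k+1}\tau$ is satisfied by $J'$; (b) all other clauses of $N^\bot_{k+1}\tau$, which do not mention $S$, are still satisfied since $J'$ agrees with $J$ on non-$S$ atoms and $J\models N^\bot_k\tau$.

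Obligation (a) is where the hypothesis on $N_S$ does the work and is the main obstacle. Take a clause in $N^\bot_{k+1}$ with an $S$-atom in its antecedent; by the shape of the clauses in $N_{k+1}$ that contain a negative $S$-literal, it is an instance $(S(x),\Gamma_1\imp E[x]_p)\rho$. If under $\tau$ the atom $S(x\rho)\tau$ is false in $J'$, the clause instance holds trivially. If it is true, then by our definition of $J'$ we have $x\rho\tau = s\sigma_i\tau$ for some $i$ with $J\models\Gamma_2\sigma_i\tau$; resolving $(S(x),\Gamma_1\imp E[x]_p)\rho$ against the corresponding $S$-clause yields a clause in $N_S$, which by hypothesis equals $(\Gamma\imp E[s]_p)\sigma_j$ for some $j$, hence a clause of $N^\bot_k$ up to the matching of variables, and this resolvent instantiated by $\tau$ is satisfied by $J$; combining $J\models\Gamma_2\sigma_i\tau$ with satisfaction of the resolvent gives $J'\models (S(x),\Gamma_1\imp E[x]_p)\rho\tau$. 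The delicate bookkeeping here is to keep the substitutions $\rho$, $\sigma_i$, $\sigma_j$ and the rigid core-variables aligned so that "the resolvent lies in $N_S$ and equals some $(\Gamma\imp E[s]_p)\sigma_j$" matches "the instance of $(\Gamma\imp E[s]_p)\sigma_j$ added to $N^\bot_k$"; this is routine but must be done carefully, exploiting that $x$ is linear and fresh in the Shallow rule so that the resolvent on $S$ genuinely reconstructs $E[s]_p$ with $s$ plugged back into position $p$. Having derived a contradiction, $N^\bot_k\tau$ is unsatisfiable for all $\tau$, so $N^\bot_k$ is a conflicting core of $N_k$.
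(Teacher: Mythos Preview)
Your proposal is correct and takes essentially the same approach as the paper: both arguments establish unsatisfiability of $N^\bot_k\tau$ by a model-manipulation on the fresh predicate $S$, using the hypothesis that all $S$-resolvents lie in $N_S$ to handle the clauses with a negative $S$-literal. You phrase it as a proof by contradiction (extend a hypothetical model $J$ of $N^\bot_k\tau$ to a model $J'$ of $N^\bot_{k+1}\tau$ by setting the $S$-atoms from the $\Gamma_2\imp S(s)$ instances), whereas the paper argues the contrapositive directly---starting from an arbitrary $I$, flipping $S$-truth-values to satisfy the falsified $S$-clauses, and then locating a falsified resolvent in $N^\bot_k\sigma$; your explicit check that the $S$-free clauses of $N^\bot_{k+1}$ are instances of $N\subseteq N_k$ (via freshness of $S$) is a point the paper leaves implicit.
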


\begin{proof}
Let  $\sigma$ be a grounding substitution for $N^\bot_{k}$ and $N^\bot_{k+1}$ and $I$ be an interpretation. 
As  $N^\bot_{k+1}\sigma$ is unsatisfiable, there is a clause $D\in N^\bot_{k+1}\sigma$ such that $I\unsat D$. \\
If $D$ does not contain an $S$-atom, then $D \in N^\bot_{k}\sigma$ and hence $I\unsat N^\bot_{k}\sigma$.\\
Now assume only clauses that contain $S$-atoms are false under $I$. 
By construction, any such clause is equal to either  $(S(x),\Gamma_1 \imp E[x]_{p})\sigma'=C_1\sigma'$ or $(\Gamma_2 \imp S(s))\sigma'=C_2\sigma'$ for some substitution $\sigma'$.
Let $I' :=  \{S(s)\sigma' \mid $ $C_2\sigma' \in N^\bot_{k+1}\sigma \mathrm{ ~and~ } I\unsat C_2\sigma' \} \cup I \setminus \{ S(x)\sigma' ~\vert~  C_1\sigma' \in N^\bot_{k+1}\sigma \mathrm{ ~and~ } I\unsat C_1\sigma' \}$, i.e.,
we change the truth value for $S$-Literals such that the clauses unsatisfied under $I$ are satisfied under $I'$.\\
Since  $I$ and $I'$ only differ on literals with predicate $S$ and $N^\bot_{k+1}\sigma$ is unsatisfiable, some clause $C$, containing an $S$-atom and satisfied under $I$, has to be false under $I'$.\\
Let $C=C_1\sigma_1$. Since $I\sat C$, $S(x)\sigma_1$ was added to $I'$ by some clause $D =C_2\sigma_2$, where $S(s)\sigma_2 = S(x)\sigma_1$.
Hence, $C$ and $D$ can be resolved on their $S$-literals and the resolvent $R$ is in $N^\bot_{k}\sigma$.
Since $I\unsat D$, $I'\unsat C$ and $R$ contains no $S$-atom, $I\unsat R$ and therefore $I\unsat N^\bot_{k}\sigma$.\\
For $C=C_2\sigma_2$ the proof is analogous.\\
Thus, for all interpretations $I$ and grounding substitutions $\sigma$, $I\unsat N^\bot_{k}\sigma$ and hence $N^\bot_{k}\sigma$ is a conflicting core of $N_k$.
 \end{proof}

\subsubsection{Lifting the Linear Transformation.}
In order to lift a Linear transformation the remaining and the newly introduced variable  
need to be instantiated the same term. 

\begin{exmp}\label{liftLinEx}
Let $N_{k-1} = \{ P(x,x);$ $P(y,a),P(z,b) \imp \}$. 
Then $N_k=  \{ P(x,x');$ $P(y,a),P(z,b) \imp \}$ is a Linear transformation of $N_{k-1} $ and
and  $N^\bot_k=\{ P(a,a);$ $P(b,b);$ $P(a,a),P(b,b)\imp\}$ is a conflicting core of $N_k$.
Since $P(a,a)$ and $P(b,b)$ are instances of $P(x,x)$ lifting succeeds and $N^\bot_k$ is also a core of $N_{k-1} $.
\end{exmp} 

\begin{lem}[Lifting the Linear Transformation]\label{liftlin}
Let $N_k\Rightarrow_{\text{LI}} N_{k+1}$ where $N_k = N\cup\{\Gamma \imp E[x]_{p,q}\}$ and $N_{k+1} = N\cup\{\Gamma\{x \mapsto x'\},\Gamma \imp E[x']_q\}$.
Let $N^\bot_{k+1}$ be a conflicting core of $N_{k+1}$.
If for all $(\Gamma\{x \mapsto x'\},\Gamma \imp E[x']_q)\sigma_j\in N^\bot_{k+1}$, $1\leq j\leq m$ we have $x\sigma_j = x'\sigma_j$ then
$N^\bot_{k+1}\setminus\{(\Gamma\{x \mapsto x'\},\Gamma \imp E[x']_q)\sigma_j \mid 1\leq j\leq m\}\cup\{(\Gamma \imp E[x]_{p,q})\sigma_j\mid 1\leq j\leq m\}$ is a conflicting core of $N_k$.
\end{lem}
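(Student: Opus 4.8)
The plan is to show that the proposed clause set, call it $N^\bot_k$, is a conflicting core of $N_k$; that is, every ground instance $N^\bot_k\sigma$ is unsatisfiable. The structure mirrors the proof of the direct Horn lifting lemma (Lemma~\ref{lifthorn1}): I would fix an arbitrary grounding substitution $\sigma$ for $N^\bot_k$ (which we may assume also grounds $N^\bot_{k+1}$, extending it on $x'$ if necessary), fix an arbitrary Herbrand interpretation $I$, and argue that some clause of $N^\bot_k\sigma$ is falsified by $I$. Since $N^\bot_{k+1}\sigma$ is unsatisfiable, $I$ falsifies some clause $D\in N^\bot_{k+1}\sigma$. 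If $D$ comes from a clause of $N^\bot_{k+1}$ that is \emph{not} one of the $(\Gamma\{x\mapsto x'\},\Gamma\imp E[x']_q)\sigma_j$, then that same clause lies in $N^\bot_k$ (the two cores agree outside the exchanged clauses), so $I$ falsifies $N^\bot_k\sigma$ and we are done.

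The remaining case is that $I$ falsifies only clauses of the form $(\Gamma\{x\mapsto x'\},\Gamma\imp E[x']_q)\sigma_j\sigma$ for some $j$. Here I would use the hypothesis $x\sigma_j = x'\sigma_j$: it implies that $(\Gamma\{x\mapsto x'\},\Gamma\imp E[x']_q)\sigma_j$ and $(\Gamma\imp E[x]_{p,q})\sigma_j$ have the same set of ground instances — indeed $(\Gamma\{x\mapsto x'\},\Gamma\imp E[x']_q)\sigma_j$, once $x$ and $x'$ are identified, has negative part $\Gamma\sigma_j$ (the two copies $\Gamma\{x\mapsto x'\}$ and $\Gamma$ collapse to the same multiset of atoms, which a clause interprets as a set) and positive part $E[x]_{p,q}\sigma_j$, which is exactly $(\Gamma\imp E[x]_{p,q})\sigma_j$. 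Hence the clause $(\Gamma\imp E[x]_{p,q})\sigma_j\sigma \in N^\bot_k\sigma$ is logically equivalent to $D$, so $I$ falsifies it as well, and therefore $I\not\vDash N^\bot_k\sigma$.

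Finally, I would observe that the clauses of $N^\bot_k$ are indeed instances of clauses of $N_k$: every clause retained from $N^\bot_{k+1}$ is, since the non-exchanged clauses of $N_{k+1}$ all occur in $N_k$, and each newly added clause $(\Gamma\imp E[x]_{p,q})\sigma_j$ is by definition an instance of the clause $\Gamma\imp E[x]_{p,q}\in N_k$. Combined with unsatisfiability of every $N^\bot_k\sigma$, this shows $N^\bot_k$ is a conflicting core of $N_k$.

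The main obstacle is the bookkeeping around the duplicated context $\Gamma$: one must be careful that identifying $x$ with $x'$ really does collapse $\Gamma\{x\mapsto x'\},\Gamma$ back to $\Gamma$ as a clause (this uses that a clause is a multiset read as a disjunction, so repeated literals are harmless) and that the position $q$ in $E[x']_q$ correctly lines up with one of the two occurrences $p,q$ in $E[x]_{p,q}$ after the renaming is undone. Once that identification is pinned down, the interpretation-chasing argument is routine and essentially identical to the reasoning already used for the Horn and Shallow lifting lemmas.
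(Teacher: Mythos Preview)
Your proposal is correct and follows essentially the same route as the paper: the paper argues directly that $N^\bot_k\sigma \vDash N^\bot_{k+1}\sigma$ via the entailment $(\Gamma \imp E[x]_{p,q})\sigma_j\sigma \vDash (\Gamma,\Gamma \imp E[x]_{p,q})\sigma_j\sigma = (\Gamma\{x\mapsto x'\},\Gamma \imp E[x']_q)\sigma_j\sigma$, which is exactly your per-interpretation case split written in one line. Your explicit verification that every clause of $N^\bot_k$ is an instance of a clause of $N_k$ is a detail the paper leaves implicit.
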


\begin{proof}
Let $\sigma$ be a grounding substitution for $N^\bot_{k}$ and $N^\bot_{k+1}$. 
As $x\sigma_j = x'\sigma_j$ for $1\leq j\leq m$, $(\Gamma \imp E[x]_{p,q})\sigma_j\sigma \sat (\Gamma,\Gamma \imp E[x]_{p,q})\sigma_j\sigma = (\Gamma\{x \mapsto x'\},\Gamma \imp E[x']_q)\sigma_j\sigma $.
Hence, $N^\bot_{k}\sigma \sat N^\bot_{k}\sigma \cup \{(\Gamma\{x \mapsto x'\},\Gamma \imp E[x']_q)\sigma_j\sigma \mid 1\leq j\leq m\} \sat N^\bot_{k+1}\sigma$.
Since $N^\bot_{k+1}\sigma$ is unsatisfiable $N^\bot_{k}\sigma$ is unsatisfiable as well.
Therefore,  $N^\bot_{k}$ is a conflicting core of $N_k$.
\end{proof}

\subsubsection{Lifting with Instantiation.}
By definition, if $N^\bot$ is a conflicting core of $N$, then $N^\bot\tau$ is also a conflicting core of $N$ for any $\tau$. 
Example \ref{liftex} shows it is sometimes possible to instantiate a conflicting core, where no lifting lemma applies, into a core, where one does.
This then still implies a successful lifting.

\begin{exmp}\label{liftex}
Let $N_{k-1} = \{ P(x,x);$ $P(y,a),P(z,b) \imp \}$. 
Then $N_k=  \{ P(x,x');$ $P(y,a),P(z,b) \imp \}$ is a Linear transformation of $N_{k-1} $ and
and  $N^\bot_k=\{ P(y,a);$ $P(b,b);$ $P(y,a),P(b,b)\imp\}$ is a conflicting core of $N_k$.
Since for $P(y,a)= P(x,x')\sigma$ $x\sigma=y\neq a=x'\sigma$ Lemma \ref{liftlin} is not applicable.

However,  Lemma \ref{liftlin} can be applied on $N^\bot_k\{y \mapsto a; z \mapsto b\}= \{P(a,a);$ $P(b,b);$ $P(a,a),P(b,b)\imp\}$.
\end{exmp}

\section{Approximation Refinement}\label{abstrref}

In the previous section, we have presented the lifting process.
If, however, in one of the lifting steps  conditions of the lemma are not met, lifting fails and 
we now refine the original clause set in order to rule out the non-liftable conflicting core.
Again, since lifting fails at one of the approximation steps, we consider the different
approximation steps for refinement.

\subsubsection{Linear Approximation Refinement.}
A Linear transformation
enables further instantiations of the abstracted clause compared to the original, 
that is, two variables that were the same can now be instantiated differently. 
If the conflicting core of the approximation contains such instances the lifting fails.

\begin{defin}[Linear Approximation Refinement]
Let $N$ be a set of variable disjoint clauses, 
$N\apr^* N_k \Rightarrow_{\text{LI}} N_{k+1}$ and
$N^\bot_{k+1}$ be a conflicting core of $N_{k+1}$ where Lemma~\ref{liftlin} does not apply.
Let $C'\sigma=(\Gamma\{x \mapsto x'\},\Gamma \imp E[x']_q)\sigma \in N^\bot_{k+1}$
 such that $x\sigma$ and $ x'\sigma$ have no common instances. 
Let $C\in N$ be the Ancestor of $C'\in N_{k+1}$.
Then the \emph{linear approximation refinement} of $N$, $C$, $x$, $x'$, $\sigma$ is the clause
set $N\setminus\{C\}\cup\{C\tau_1,\ldots,C\tau_n\}$ where the $C\tau_i$ are the specific
instances of $C$ with respect to the substitutions $\{x\mapsto x\sigma\}$ and
$\{x\mapsto x'\sigma\}$.
\end{defin}

Note that if there is no $C'\sigma$, where $x\sigma$ and $ x'\sigma$ have no common instances, 
it implies that there is a substitution $\tau$
where Lemma~\ref{liftlin} applies on $N^\bot_{k+1}\tau$. Hence, $N^\bot_{k+1}\tau$ is a liftable
conflicting core.

Let $N_0\apr^* N_{k-1}= N\cup\{\Gamma \imp E[x]_{p,q}\} \Rightarrow_{\text{LI}} N_{k}= N\cup\{\Gamma\{x \mapsto x'\},\Gamma \imp E[x']_q\}$
and the core $N^\bot_k$ of $N_k$ contains the clause $C'\sigma=(\Gamma\{x \mapsto x'\},\Gamma \imp E[x']_q)\sigma$, where $x\sigma$ and $x'\sigma$ have no common instances.
After applying Linear Approximation Refinement, there are $C\tau_i$ and $ C\tau_j$ with $i\neq j$ such that  $C\tau_i$ contains all instances where $\{x \mapsto x\sigma \}$
and $C\tau_j$ contains all instances where $\{x \mapsto x'\sigma \}$. 
Assume there is a $C''$ with an ancestor $C\tau$ such that $C'\sigma$ is an instance of $C''$.
This would imply that $C\tau$ has instances, where $\{x \mapsto x\sigma\} $ and $\{x \mapsto x'\sigma\} $.
Then $C\tau_i= C\tau= C\tau_j$, which is a contradiction to Definition \ref{specInst}.

\begin{exmp}\label{LinAbsEx}
Let $N_0 = \{ P(x,x);$ $P(y,a),P(z,b) \imp \}$. 
Then $N_k=  \{ P(x,x');$ $P(y,a),P(z,b) \imp \}$ is a Linear transformation of $N_0 $ and
and $N^\bot_k=\{ P(a,a);$ $P(a,b);$ $P(a,a),P(a,b)\imp\}$ is a conflicting core of $N_k$.\\
Due to $P(a,b)= P(x,x')\{x\mapsto a, x' \mapsto b\}$ lifting fails.
The Linear Approximation Refinement replaces $P(x,x)$ in $N_0$ with $P(a,a)$ and $P(b,b)$.
In the refined approximation $N'_k = \{ P(a,a);P(b,b);$ $P(y,a),P(z,b)\imp\}$
the violating clause $P(a,b)$ is not an instance of $N'_k$ and hence, the not-liftable conflicting core $N^\bot_k$ cannot be found again.
\end{exmp} 

\subsubsection{Shallow Approximation Refinement.}
The Shallow transformation is somewhat more complex than linear transformation, 
but the idea behind it is very similar to the linear case.
As mentioned before, the Shallow transformation can always be lifted 
if the set of common variables
$\vars(\Gamma_2,s) \cap \vars(\Gamma_1,E[x]_p)$ is empty. 
Otherwise, each such variable potentially 
introduces instantiations that are not liftable.

\begin{defin}[Shallow Approximation Refinement]
Let $N$ be a set of variable disjoint clauses, 
$N\apr^* N_k \Rightarrow_{\text{SH}} N_{k+1}$ and
$N^\bot_{k+1}$ be a conflicting core of $N_{k+1}$ where Lemma~\ref{liftshallow} does not apply.
Let $C_R$ be the resolvent from the final Shallow rule application such that $C_R \neq (\Gamma \imp E[s]_{p})\sigma_R$ for any $\sigma_R$. 
Let $C_1\sigma_1 \in N^\bot_{k+1}$ and $C_2\sigma_2 \in N^\bot_{k+1}$ be the parent clauses of $C_R$.
Let $y \in \text{dom}(\sigma_1) \cap \text{dom}(\sigma_2)$, where $y\sigma_1$ and $ y\sigma_2$ have no common instances. 
Let $C\in N$ be the Ancestor of $C_1\in N_{k+1}$.
Then the \emph{shallow approximation refinement} of $N$, $C$, $x$, $\sigma_1$, $\sigma_2$ is the clause
set $N\setminus\{C\}\cup\{C\tau_1,\ldots,C\tau_n\}$ where the $C\tau_i$ are the specific
instances of $C$ with respect to the substitutions $\{x\mapsto x\sigma_1\}$ and
$\{x\mapsto x\sigma_2\}$.
\end{defin}

As in Linear Approximation Refinement, if for every resolvent $C_R\sigma$  $y\sigma_1$ and $ y\sigma_2$ have common instances, it implies that there is a substitution $\tau$
where Lemma~\ref{liftshallow} applies on $N^\bot_{k+1}\tau$.
After applying Shallow Approximation Refinement, there are $C\tau_i$ and $ C\tau_j$ with $i\neq j$ such that  $C\tau_i$ contains all instances where $\{x \mapsto x\sigma_1\} $
and $C\tau_j$ contains all instances where $\{x \mapsto x\sigma_2\}$. 
Hence, $C\tau_i$ is now the ancestor of $C_1\sigma_1$, while $C\tau_j$ is the ancestor of $C_2\sigma_2$.
Since they have different ancestors, they can no longer be resolved on their $S$-atoms which now have different predicates.
Hence $C_R$ is no longer a resolvent in the conflicting  core.

\begin{exmp}
Let $N_0 = \{ P(f(x,g(x))); P(f(a,g(b))\imp\}$ with signature $\Sigma = a/0,$ $b/0,$ $g/1, f/2$.
Then $N_k=  \{  S(z) \imp  P(f(x,z)); S(g(y));$ $ P(f(a,g(b))\imp\}$ is a Shallow transformation of $N_0 $ and
and $N^\bot_k=\{  S(g(b)) \imp  P(f(a,g(b))); S(g(b));$ $  P(f(a,g(b))\imp\}$ is a conflicting core of $N_k$.\\
The clauses $S(g(b)) \imp  P(f(a,g(b)))$ and $ S(g(b))$ have the resolvent $P(f(a,g(b)))$, which is not an instance of  $P(f(x,g(x)))$.
The Shallow Approximation Refinement replaces $P(f(x,g(x)))$ in $N_0$ with $P(f(a,g(a)))$, $P(f(b,g(b)))$,\\ $P(f(g(x),g(g(x))))$ and $P(f(f(x,y),g(f(x,y))))$.\\
The approximation of the refined $N_0$ is now satisfiable. 
\end{exmp}

\subsubsection{Horn Approximation Refinement.}
Lifting a core of a Horn transformation fails, if  the positive literals removed by 
the Horn transformation are not dealt with in the approximated proof.
Since Lemma \ref{lifthorn2} only handles cases where the approximated clause appears uninstantiated in the conflicting core,
the Horn Approximation Refinement is used to ensure such a core exists. 

\begin{defin}[Horn Approximation Refinement]
Let $N$ be a set of variable disjoint clauses, 
$N\apr^* N_k \Rightarrow_{\text{HO}} N_{k+1}$,
$N_k = N\cup\{\Gamma \imp E_1,E_2\}$ and $N_{k+1} = N\cup\{\Gamma \imp E_1\}$ and
$N^\bot_{k+1}$ be a conflicting core of $N_{k+1}$ where Lemmas ~\ref{lifthorn1} and ~\ref{lifthorn2} do not apply.
Let $(\Gamma \imp E_1)\sigma \in N^\bot_{k+1}$ 
 be a clause from the final Horn rule application such that $\sigma$ is not a variable renaming and
$N_k^j\tau_j\not \models  (\Gamma \imp E_1,E_2)\sigma'_j\rightarrow (\Gamma \imp E_1)\sigma$ for any $N_k^j\subseteq N_k$,$\tau_j$ and $\sigma'_j$. 
Let $C\in N$ be the Ancestor of $\Gamma \imp E_1\in N_{k+1}$
and $\sigma'$ a substitution such that $\sigma\sigma'$ is linear for $C$.
Then the \emph{horn approximation refinement I} of $N$, $C$, $\sigma$, $\sigma'$ is the clause
set $N\setminus\{C\}\cup\{C\sigma\sigma',C\tau_1,\ldots,C\tau_n\}$ where the $C\tau_i$ are the specific
instances of $C$ with respect to the substitutions $\sigma\sigma'$.
\end{defin}

Note that the condition for the extended version of  specific instantiation to have a finite representation is not generally met by an arbitrary $\sigma$.
Therefore, $\sigma$ may need to be further instantiated or even made ground. 
After the Horn Approximation Refinement, Lemma~\ref{lifthorn2} can be applied on 
the clause with ancestor $C\sigma\sigma'$.

\begin{exmp}
Let $N_0 = \{ P(x),Q(x);  P(a)\imp\}$ with signature $\Sigma = a/0, f/1$. 
The Horn transformation $N_k=\{  P(x); P(a) \imp\}$ has a conflicting core 
$N^\bot_k= \{  P(a); P(a) \imp\}$.
We pick $ \imp P(a)$ as the instance of $ P(x) \in N^\bot_k$ to use for  the Horn Approximation Refinement.
The result is $N'_0=\{  P(a),Q(a);$ $ P(f(x)),Q(f(x));$ $  P(a)\imp\} $ 
and its approximation also has $N^\bot_k$ as a conflicting core.
However, now Lemma \ref{lifthorn2} applies.
\end{exmp}

\begin{lem}[Completeness]\label{complete}
Let $N$ be an unsatisfiable clause set and $N_k$ its approximation.
Then, there exists a conflicting core of $N_k$ that can be lifted to $N$.
\end{lem}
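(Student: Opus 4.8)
The plan is to argue by induction on the length $k$ of the approximation chain $N = N_0 \apr N_1 \apr \cdots \apr N_k$, running the lifting process of Section~\ref{sec:lifting} in lockstep with a termination measure that forces progress even when individual lifting steps fail. Since $N$ is unsatisfiable and $\apr$ is an over-approximation (Corollary after Lemma~\ref{Apr_sound}), $N_k$ is unsatisfiable; as $N_k$ lies in the decidable mslH fragment, a resolution-with-selection refutation exists, and from it we extract a conflicting core $N^\bot_k$ of $N_k$ as described before Example~\ref{coreex}. The goal is to show this core, possibly after an instantiation $\tau$ (using that $N^\bot_k\tau$ is again a conflicting core for any $\tau$, cf.\ the ``Lifting with Instantiation'' paragraph) or after finitely many refinement steps on $N$, can be transformed step-by-step through the chain down to a conflicting core of $N$.

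The induction step considers the last approximation rule applied, $N_{k-1} \apr N_k$, and splits into four cases matching the four transformations. \emph{Monadic:} always lifts by Lemma~\ref{liftmonadic}. \emph{Linear:} either the side condition $x\sigma_j = x'\sigma_j$ of Lemma~\ref{liftlin} holds for all relevant instances in the core (after possibly instantiating via Example~\ref{liftex}-style unification, which is possible exactly when no clause $C'\sigma$ has $x\sigma$, $x'\sigma$ without common instance), and lifting succeeds; or such a ``bad'' clause exists, and we apply Linear Approximation Refinement to $N$, replacing $C$ by its specific instances. \emph{Shallow:} analogously, either all resolvents on the $S$-literal are instances of the ancestor clause and Lemma~\ref{liftshallow} applies (again after a possible instantiation when the offending substitutions have common instances), or Shallow Approximation Refinement fires. \emph{Horn:} either Lemma~\ref{lifthorn1} applies (the original and approximated clauses are equivalent for the core's instances), or Lemma~\ref{lifthorn2} applies when $(\Gamma\imp E_1)\sigma$ appears with $\sigma$ a renaming — in which case we recurse on a conflicting core of $N\cup\{E_2\}$, noting that $N\cup\{E_2\}$ is still unsatisfiable since $N_k \models N_k\cup\{E_2\sigma\}$ and $E_2$ has one fewer positive literal than the non-Horn clause, so the inductive hypothesis applies to the strictly smaller approximation problem of $N\cup\{E_2\}$ (whose approximation needs one fewer Horn step) — or else Horn Approximation Refinement I fires.

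The main obstacle is termination: one must exhibit a well-founded measure on the pair (clause set $N$, chain length to its mslH normal form) that strictly decreases under each refinement, and show that refinement genuinely excludes the non-liftable core — i.e.\ after a Linear/Shallow/Horn refinement, the previously offending clause is no longer an instance of (the approximation of) the refined clause set, as argued informally in the paragraphs following each refinement definition (using the disjointness property~(ii) of specific instances from Definition~\ref{specInst}). The delicate point is that refinement replaces a clause $C$ by several instances $C\tau_i$, which can only \emph{increase} clause count and re-run the whole approximation, so the measure cannot be clause count; instead one argues that each refinement strictly shrinks the (finite) set of ground instances of $C$ that can participate in \emph{that particular} unliftable core while not creating new cores of the same ``bad'' shape, so that only finitely many refinements occur before a liftable core is obtained for each of the finitely many approximation steps. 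Once termination is in hand, the inductive descent through the chain — using Lemmas~\ref{liftmonadic}, \ref{lifthorn1}, \ref{lifthorn2}, \ref{liftshallow}, \ref{liftlin} at the successful steps — yields a conflicting core of $N$, establishing unsatisfiability of $N$ is witnessed, which is the claim.
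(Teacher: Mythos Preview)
Your proposal conflates the static existence claim of the lemma with the dynamic refinement procedure of Section~\ref{abstrref}, and this mismatch is fatal. The lemma fixes $N$ and its approximation $N_k$ and asserts the existence of a conflicting core \emph{of $N_k$} that lifts to $N$. But refinement replaces $N$ by a different clause set $N'$ with a \emph{different} approximation $N'_k$; a liftable core of $N'_k$ is not a core of $N_k$, so even if your refinement loop terminates it does not establish the stated conclusion. (The paper explicitly flags this distinction in the paragraph immediately following the proof: the lemma is ``static completeness'' and does not tell how to find the good core operationally.) Your termination argument is also only a sketch, and the measure you suggest is not obviously well-founded since refinement can spawn new approximation chains.

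The paper's proof avoids all of this by running the induction in the \emph{opposite direction}: assume by the inductive hypothesis a conflicting core $N^\bot_{k-1}$ of $N_{k-1}$ that lifts to $N$, and then \emph{push it forward} to a conflicting core $N^\bot_k$ of $N_k$ that, by construction, satisfies the hypotheses of the relevant lifting lemma. For instance, in the Linear case one replaces each $(\Gamma\imp E[x]_{p,q})\sigma_j$ in $N^\bot_{k-1}$ by $(\Gamma\{x\mapsto x'\},\Gamma\imp E[x']_q)\{x'\mapsto x\}\sigma_j$; the precomposition with $\{x'\mapsto x\}$ guarantees $x\sigma_j=x'\sigma_j$, so Lemma~\ref{liftlin} applies automatically. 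The Shallow and Monadic cases are similar. The Horn case needs a little more care (separating multiple instances of the non-Horn clause via specific instantiation so that only one instance appears), but still no appeal to the refinement machinery is made. This forward-construction idea is the missing ingredient in your argument and is both simpler and what actually proves the lemma as stated.
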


\begin{proof}
by induction on the number $k$ of approximation steps. The case $k=0$ is obvious.
For $k>0$, let $N\apr^* N_{k-1} \apr N_k$.
By the inductive hypothesis, there is a conflicting core  $N^\bot_{k-1}$ of $N_{k-1}$ which can 
be lifted to $N$.\\
The final approximation rule application is either a Linear, a Shallow, a Horn or a Monadic transformation,
considered below by case analysis.

Linear Case. Let $N\apr^* N_{k-1}= N'\cup\{\Gamma \imp E[x]_{p,q}\} \Rightarrow_{\text{LI}} N_k= N'\cup\{\Gamma\{x \mapsto x'\},\Gamma \imp E[x']_q\}$.
For every $(\Gamma \imp E[x]_{p,q})\sigma_j \in N^\bot_{k-1}$ $1\leq j\leq m$, $(\Gamma \imp E[x]_{p,q})\sigma_j \models (\Gamma\{x \mapsto x'\},\Gamma \imp E[x']_q)(\{x' \mapsto x\}\sigma_j)$. 
Hence $N^\bot_k=N^\bot_{k-1}\setminus\{(\Gamma \imp E[x]_{p,q})\sigma_j\mid 1\leq j\leq m\}\cup \{(\Gamma\{x \mapsto x'\},\Gamma \imp E[x']_q)\{x' \mapsto x\}\sigma_j \mid 1\leq j\leq m\}$ 
is a conflicting core of $N_k$.
By Lemma \ref{liftlin} $N^\bot_k$ can be lifted back to $N^\bot_{k-1}$.
Hence, the conflicting core $N^\bot_k$ can be lifted to $N$.

Shallow Case. Let $N\apr^*  N_{k-1}=N'\cup\{\Gamma \imp E[s]_{p}\} \Rightarrow_{\text{SH}} N_{k}= N'\cup\{S(x),\Gamma_1 \imp E[x]_{p}\}\cup\{ \Gamma_2 \imp S(s)\}$.
We construct $N^\bot_S$ from $N^\bot_{k-1}$ by replacing every $(\Gamma \imp E[s]_{p})\sigma_j \in N^\bot_{k-1}$ $1 \leq j \leq m$ with $(S_j(x),\Gamma_1 \imp E[x]_p)\sigma_j$ and $(\Gamma_2 \imp S_j(s))\sigma_j$.
$N^\bot_S$ is a conflicting core, which by $m$ applications of Lemma \ref{liftshallow} on each $S_j$ can be lifted to $N^\bot_{k-1}$.
From $N^\bot_S$ we get $N^\bot_{k}$ by renaming every $S_j$ into $S$, which is a conflicting core of $N_k$.
The existence of  $N^\bot_S$ shows that $N^\bot_{k}$ can be lifted to $N^\bot_{k-1}$.

Horn Case. W.l.o.g. let $N\apr^*  N_{k-1}=N'\cup\{\Gamma \imp E_1,E_2\} \Rightarrow_{\text{HO}} N_{k}= N'\cup\{\Gamma \imp E_1\}$.
Let $C = \Gamma \imp E_1 ,E_2$ and $C' = \Gamma \imp E_1$.
If $C\sigma \in N^\bot_{k-1}$ holds for at most one $\sigma$, we construct $N^\bot_{k}$ from $N^\bot_{k-1}$ by replacing $C\sigma$ with $C'\sigma$ such that $N^\bot_{k}\subseteq N_k$.
Since $C'\sigma$ subsumes $C\sigma$, $N^\bot_{n} \sat N^\bot_{n}\cup\{C\sigma\}$.
As $N^\bot_{k}\cup\{C\sigma\}$ is a superset of $N^\bot_{k-1}$, $N^\bot_{k}$ is therefore a ground conflicting core of $N_k$.
If $C'\sigma$ and $C\sigma$ are already equivalent,  $N^\bot_{k}$ can be lifted to $N^\bot_{k-1}$.
Otherwise, let $N'^\bot_{k-1}$ be $N^\bot_{k-1}$ where  $C\sigma$  is instead replaced by $E_2\sigma$.
Again since $E_2\sigma$ subsumes $C\sigma$, $N'^\bot_{k-1}$ is a ground conflicting core. 
As shown before, $(N'^\bot_{k-1}\setminus\{E_2\sigma\})\cup (N^\bot_{k}\setminus\{C'\sigma\})=N^\bot_{k-1}$ is a lifting from $N_k$ to $N_{k-1}$.\\
Assume $C\sigma_1 \in N^\bot_{k-1}$ and $C\sigma_2 \in N^\bot_{k-1}$ holds for $\sigma_1\neq \sigma_2$. 
In this case the original clause $C$ can be specifically instantiated in such a way that $C\sigma_1$ and $C\sigma_2$ are no longer instances of the same clause,
while $N^\bot_{k-1}$ remains a conflicting core.
Hence, after finitely many such partitions eventually the first case will hold. 

Monadic Case. Let $N\apr^* N_{k-j} \Rightarrow^*_{\text{MO}} N_k$ where $N_{k-j}$ has no occurrence of an atom 
$T(f_P(t_1,\dots,t_n))$ and $N_{k}$ no occurrence of an atom $P(t_1,\dots,t_n)$ and
all introduced atoms in the transformation are of the form $T(f_P(s_1,\dots,s_n))$.
By the inductive hypothesis, there is a ground conflicting core  $N^\bot_{k-j}$ of $N_{k-j}$ which can be lifted to $N$.
By Lemma \ref{Apr_sound}(ii) Monadic transformation  preserves unsatisfiability and therefore $\Proj{P}(N^\bot_{k-j})$ is a ground conflicting core of $N_{k}$. 
$\Proj{P}(N^\bot_{k-j})$ can be lifted to $\R{P}(\Proj{P}(N^\bot_{k-j}))=N^\bot_{k-j}$ a conflicting core of $N_{k-j}$.
\end{proof}

The above lemma considers static completeness, i.e., it does not tell how the conflicting
core that can eventually be lifted is found. One way is to enumerate all refutations of  $N_k$
in a fair way. A straightforward fairness criterion is to enumerate the refutations by increasing
term depth of the clauses used in the refutation. Since the decision procedure on the
mslH fragment~\cite{Weidenbach99cade} generates only finitely many different
non-redundant clauses not exceeding a concrete term depth with respect to the renaming of variables, eventually the 
liftable refutation will be generated.

\section{Future and Related Work} \label{sec:furewo}

The condition for the lifing lemma for Shallow transformation (Lemma \ref{liftshallow})  is stronger than necessary, as the following example shows.  

\begin{exmp}\label{shallowex2}
Let $N_0 = \{ P(x,z), Q(y,z) \imp R(x,f(y)); P(a,a);$ $ P(a,b); $ \\ $ Q(b,a), Q(b,b);$ $ R(a,f(b)) \imp\}$  and
 $N_k = \{ S(y),P(x,z) \imp R(x,y); Q(y,z) \imp S(f(y)); P(a,a); P(a,b); Q(b,a) , Q(b,b);$ $  R(a,f(b))  \imp\}$ is a Shallow transformation of $N_k$. 
$N_0$ and $N_k$ are unsatisfiable and $N^\bot_k=\{ S(f(b)),$ $P(a,a) \imp R(a,f(b)); Q(b,a) \imp S(f(b)); S(f(b)),P(a,b) \imp R(a,f(b)); Q(b,b) \imp S(f(b));$ $ P(a,a);  P(a,b);  Q(b,a), Q(b,b);  R(a,f(b))\imp\}$
is a conflicting core of $N_k$. Lifting $N^\bot_k$ fails because the resolvent $P(a,a), Q(b,b) \imp R(a,f(b))$ is not an instance of  $P(x,z), Q(y,z) \imp R(x,f(y))$.
However, if we ignored the violating resolvents, it would result in the valid conflicting core $N^\bot=\{ P(a,a),Q(b,a) \imp R(a,f(b)); P(a,b),Q(b,b) \imp R(a,f(b)); P(a,a);  P(a,b);$ $ Q(b,a), Q(b,b);$\\ $ R(a,f(b)) \imp\}$.
\end{exmp} 

This does not break lifting. 
The shallow refinement will partition the clause in such a way that the resolvents that violate the lifting condition are one-by-one removed.
In Example~\ref{shallowex2}, the refinement would partition  $P(x,z), Q(y,z) \imp R(x,f(y))$ on the variable $z$.
This will result in $S(f(b)),$ $P(a,a) \imp R(a,f(b))$ and  $Q(b,b) \imp S(f(b))$ containing different $S$-predicates and hence no longer being resolvable. 

However, a refinement is not necessary to achieve this effect. 
The necessary information can be taken from the refutation 
and incorporated into the conflicting core during construction.

If a problem $N$ is unsatisfiable, not only does there exist an unsatisfiability proof but one where $S$-literals only occur on leaves.
Such a proof can be found by a ordered resolution calculus through selecting negative $S$-literals and an ordering where positive $S$-literals are strictly maximal.
Given such a setting a solver will only resolve a clause  
$S(x),\Gamma_1 \imp E[x]_{p_1,\dots,p_n}$ with $\Gamma_2 \imp S(s)$ on the $S$-atom 
and hence any $S$-atom will only appear at the leaves of the refutation.

In such a proof, we then uniquely rename the $S$-predicate in each pair of leaves.
The conflicting core constructed from this proof then only allows resolutions on $S$-literals that also occur in the proof. 
On this core we can then check the lifting condition.

In example \ref{shallowex2} the core would then instead be 
$\{ S_1(f(b)),$ $P(a,a) \imp R(a,f(b));$ $ Q(b,a) \imp S_1(f(b)); S_2(f(b)),P(a,b) \imp R(a,f(b)); Q(b,b) \imp S_2(f(b));$ $ P(a,a); $ $ P(a,b);  Q(b,a), Q(b,b);  R(a,f(b))\imp\}$.
This core is liftable to $N^\bot$ by Lemma \ref{liftshallow}.

\subsubsection{Related Work}
In ''A theory of abstractions''~\cite{Giunchiglia:1992:TA:146945.146951}  Giunchiglia and Walsh don't define an actual approximation but a general framework to classify and compare approximations, which are here called abstractions. 
They informally define abstractions as ''the process of mapping a representations of a problem'' that ''helps deal with the problem in the original search space by preserving certain desirable properties`` and ''is simpler to handle``. 

In their framework an abstraction is a mapping between formal systems, i.e., a triple of a language, axioms and deduction rules, which satisfy one of the following conditions:
An increasing abstraction (TI) $f$ maps theorems only to theorems, i.e., if $\alpha$ is a theorem, then $f(\alpha)$ is also a theorem, 
while a decreasing abstraction (TD)   maps only theorems to theorems, i.e., if $f(\alpha)$ is a theorem, then $\alpha$ was also a theorem.

Furthermore, they define dual definitions for refutations, where not theorems but formulas that make a formal system inconsistent  are considered.
An increasing abstraction (NTI) then maps inconsistent formulas only to inconsistent formulas and vice versa for decreasing abstractions (NTD).

They list several examples of abstractions such as  ABSTRIPS by Sacerdoti~\cite{Sacerdott:1973:PHA:1624775.1624826}, a GPS planning method by Newell and Simon~\cite{Newell:1972:HPS:1095704}, Plaisted's theory of abstractions~\cite{journals/ai/Plaisted81}, propositional abstractions exemplified by Giunchiglia~\cite{conf/ecai/GiunchigliaG88}, predicate abstractions by by Plaisted~\cite{journals/ai/Plaisted81} and Tenenberg~\cite{Tenenberg87preservingconsistency}, domain abstractions by Hobbs~\cite{Hobbs85granularity} and Iemielinski~\cite{Imielinski:1987:DAL:1625995.1626083} and ground abstractions introduced by Plaisted~\cite{journals/ai/Plaisted81}.

With respect to their notions the approximation described in this paper is an abstraction where the desirable property is the over-approximation and the decidability of the fragment makes it simpler to handle. More specifically in the context of \cite{Giunchiglia:1992:TA:146945.146951} the approximation is an NTI abstraction for refutation systems, i.e., it is an abstraction that preserves inconsistency of the original. 

In  Plaisted~\cite{journals/ai/Plaisted81} three classes of abstractions are defined. The first two are ordinary and weak abstractions, which share the condition that if $C$ subsumes $D$ then every abstraction of $D$ is subsumed by some abstraction of $C$. However, our approximation falls in neither class as it violates this condition via the Horn approximation. For example $Q$ subsumes $P, Q$, but the Horn 
approximation $P$ of $P, Q$ is not subsumed by any approximation of $Q$.    
The third class are generalization functions, which change not the problem but abstract the resolution rule of inference. 

The theorem prover iProver uses the Inst-Gen~\cite{DBLP:conf/birthday/Korovin13} method, 
where a first-order problem is abstracted with a SAT problem by replacing every variable by the fresh constant $\bot$. 
The approximation is solved by a SAT solver and its answer is lifted to the original 
by equating abstracted terms with the set they represent, e.g., if $P(\bot)$ is true in a
 model returned by the SAT solver, then all instantiations of the original $P(x)$ are 
considered true as well. 
Inst-Gen abstracts using an  under-approximation of the original clause set.
In case the lifting of the satisfying model is inconsistent, 
the clash is resolved by appropriately instantiating the involved clauses, which mimics an inference step.
This is the dual of our method with the roles of satisfiability and unsatisfiability switched.
A further difference, however, is that Inst-Gen only finds finite models after approximation, 
while our approximation also discovers infinite models.
For example the simple problem $\{ P(a)$, $\neg P(f(a))$, $P(x) \imp P(f(f(x)))$, $P(f(f(x))) \imp P(x)\}$ has the satisfying model where $P$ is the set of even numbers.
However, iProver's approximation can never return such a model as any $P(f^n(\bot))$ will necessarily abstract both true and false atoms and therefore instantiate new clauses infinitely. 
Our method on the other hand will produce the approximation $\{P(a)$, $\neg P(f(a))$, $S(y) \imp P(f(y))$,  $P(x) \imp S(f(x))$,  $P(f(f(x))) \imp P(x)\}$, which  is saturated after inferring $P(x) \imp P(f(f(x)))$ and $\neg S(f(a))$.

In summary, we have presented the first sound and complete calculus for
first-order logic based on an over-approximation-refinement loop. There
is no implementation so far, but the calculus will be practically useful if a problem
is close to the mslH fragment in the sense that only a few refinement loops
are needed for finding the model or a liftable refutation. 
The abstraction relation is already implemented and applying it to all satisfiable non-equality
problems TPTP version~6.1 results in a success rate of 34\%, i.e., for all these problems
the approximation is not too crude and directly delivers the result.

It might be possible
to apply our idea to other decidable fragments of first-order logic.
However, then they have to support via approximation the presented lifting and refinement
principle. 

Our result is also a first step towards a  model-based guidance of first-order
reasoning. We proved that a model of the approximated clause set is also a model for the
original clause set. For model guidance, we need this property also for partial models. For example,
in the sense that if a clause is false with respect to a partial model operator on
the original clause set, it is also false with respect to a partial model operator on
the approximated clause set. This property does not hold for the standard superposition partial
model operator and the mslH approximation suggested in this paper. It is subject to future research.

\bibliographystyle{plain}
\bibliography{abstractions}

\clearpage
\appendix

\ \\
\ \\
\ \\
\ \\
\ \\
\ \\

\section{Skeleton and Partial Minimal Model Construction}

As mentioned before, Lemma \ref{termSkelLem} also holds if satisfiability of $N'$ is dropped and $I$ is replaced
by the superposition partial minimal model operator~\cite{Weidenbach01handbook}.

\begin{defin}[Partial Minimal Model Construction]
Given the set of ground clauses $N_g$ of $N$ and an ordering $\prec$ we construct an interpretation $\I{N}$ for $N$, called a partial model,  inductively as follows:
\begin{align*}
 \I{C} &:= \bigcup_{ D \in N_g,D\prec C} \delta_D\\
 \delta_D &:= \left\{
  \begin{array}{l l}
    \{P\} & \quad \text{if $D=D'\vee P$, P strictly maximal and $\I{D}\unsat D$}\\
    \emptyset & \quad \text{otherwise}
  \end{array} \right. \\
\I{N} &:= \bigcup_{C \in N_g} \delta_C
\end{align*}
Clauses $C$ with $\delta_C \neq \emptyset$ are called productive.
\end{defin}

Note that this construction doesn't terminate since the ground clause set of $N$ is generally infinite.

%

\begin{lem}\label{termSkelLem3}
Let $N_k$ be a monadic clause set and $N_0$ be its approximation via  $\apr$. 
If $P(s) \in \I{N_0}$  and $P$ is a predicate in $N_k$, then there exists a clause 
$C = \Gamma \imp \Delta,P(t) \in N_k$ and a substitution $\sigma$ such that
$s = \skt(t)\sigma$ and for each variable $x$ and predicate $S$ with $C = S(x),\Gamma' \imp \Delta,P(t[x]_p)$, $S(s'') \in \I{N_0}$, where $s=s[s'']_p$.
\end{lem}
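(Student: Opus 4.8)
The plan is to mimic the induction-on-$k$ proof of Lemma~\ref{termSkelLem2} verbatim, replacing the minimal model $I$ everywhere by the partial minimal model $\I{N_0}$, and checking that each argument that appealed to \emph{minimality} of $I$ can be replaced by an appeal to the \emph{productivity} structure of $\delta_{(\cdot)}$. Concretely, I would first restate the invariant: if $P(s)\in\I{N_0}$ with $P$ a predicate of $N_k$, then some $C=\Gamma\imp\Delta,P(t)\in N_k$ and substitution $\sigma$ satisfy $s=\skt(t)\sigma$, and for each variable/predicate pair $x,S$ with $C=S(x),\Gamma'\imp\Delta,P(t[x]_p)$ we have $S(s'')\in\I{N_0}$ for $s=s[s'']_p$. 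The base case $N_k=N_0$: since $P(s)\in\I{N_0}$, there is a productive ground clause $D$ with $\delta_D=\{P(s)\}$, i.e.\ $D=D'\vee P(s)$ with $P(s)$ strictly maximal and $\I{D}\unsat D$; in particular $\I{D}\not\models D'$, so the negative literals of $D'$ all hold in $\I{D}\subseteq\I{N_0}$ and the positive ones do not. Writing $D$ as a ground instance $C\rho$ of some $C=\Gamma\imp\Delta,P(t)\in N_0$, the fact that the antecedent is satisfied and $P(t)\rho=P(s)$ gives $s=t\rho=\skt(t)\sigma$ (taking $\sigma$ to instantiate the fresh skeleton variables as $\rho$ does), and the productivity witnesses the $S(s'')\in\I{N_0}$ requirement. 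This is exactly where the original proof used "$I\setminus\{P(s)\}$ is still a model, contradicting minimality" — the productive clause $D$ now plays the role of that witness.

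The inductive step proceeds by the same Horn/Linear/Shallow case analysis as in Lemma~\ref{termSkelLem2}, on the first approximation rule application in $N_k\apr N_{k-1}\apr^* N_0$. In each case, if the induction-hypothesis clause $C\in N_{k-1}$ survives into $N_k$ (is not the rule's ancestor), we are done; otherwise $C$ is the approximated child and we push the claim back to the ancestor clause in $N_k$ exactly as before. For the Shallow case we again invoke the induction hypothesis a second time on the auxiliary atom $S(s'')\in\I{N_0}$ to recover the witness clause $C_S$, note it must be $C'_{k-1}$ since $S$ is fresh, and compose skeletons. The only genuinely new bookkeeping is making sure the auxiliary $S$-atoms produced along the way are themselves in $\I{N_0}$; but this is handed to us by the induction hypothesis's second clause, so nothing beyond $\I{N_0}$-membership of the atoms we start from is ever needed.

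The main obstacle — and the point that deserves care — is that $\I{N_0}$ is only a \emph{partial} model and is built from \emph{ground} clauses, so I cannot speak of "$\I{N_0}$ satisfies the clause $C$" the way the original proof implicitly does for a genuine model $I$. The fix is to argue purely at the level of the productive ground instance $D=C\rho$: productivity of $D$ tells me its body $\Gamma\rho$ is contained in $\I{D}\subseteq\I{N_0}$ and the head literal equals $P(s)$, which is all the skeleton argument actually consumes. I would therefore phrase the base case and the ground-instance extraction in every case in terms of "let $D\prec\!\!\text{-minimal}$ productive clause with $P(s)\in\delta_D$" rather than "by minimality of the model". Once that reformulation is in place the rest is a mechanical transcription of the proof of Lemma~\ref{termSkelLem2}, and I would in fact just write "the proof is identical to that of Lemma~\ref{termSkelLem2}, reading $\I{N_0}$ for $I$ and replacing the minimality argument in the base case by: $P(s)\in\I{N_0}$ means some ground clause $C\rho$ is productive with $\delta_{C\rho}=\{P(s)\}$, whence its antecedent holds in $\I{N_0}$."
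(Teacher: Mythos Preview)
Your proposal is correct and follows essentially the same approach as the paper's own proof: induction on $k$ with the same Horn/Linear/Shallow case analysis copied verbatim from Lemma~\ref{termSkelLem2}, and a base case that replaces the minimality argument by productivity of a ground instance. The paper in fact compresses the base case to the single sentence ``holds by definition of the model operator $\I{}$''; your expansion via the productive clause $D=C\rho$ with $\I{D}\unsat D$ (hence $\Gamma\rho\subseteq\I{D}\subseteq\I{N_0}$) is exactly what that sentence means, and your caution about $\I{N_0}$ not being a genuine model is well-placed but, as you note, ultimately harmless since the inductive step only ever consumes atom-membership in $\I{N_0}$ via the induction hypothesis.
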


\begin{proof}
By induction on $k$.\\
The base $N_k=N_0$ holds by definition of the model operator $\I{}$. \\
Let  $N = N_k\apr N_{k-1} \apr^* N_0$, $P(s) \in  \I{N_0}$ and $P$ is a predicate in $N_k$ and hence also in $N_{k-1}$. 
By the induction hypothesis, there exist a clause $C = \Gamma \imp \Delta,P(t) \in N_{k-1}$ and a substitution $\sigma$ such that
$s = \skt(t)\sigma$ and for each variable $x$ and predicate $S$ with $C = S(x),\Gamma' \imp \Delta,P(t[x]_p)$, $S(s'') \in \I{N_0}$, where $s=s[s'']_p$.

Let $\apr$ be a Horn transformation that replaces $\Gamma'' \imp \Delta',Q(t')$ with $\Gamma'' \imp Q(t')$.
If $C \neq \Gamma'' \imp Q(t')$, then $C \in N_k$ fulfills the claim.
Otherwise, $\Gamma'' \imp \Delta',Q(t) \in N_k$ fulfills the claim since $P=Q$ and $\Gamma' = \Gamma''$.

Let $\apr$ be a linear transformation that replaces $C_k = \Gamma'' \imp E[x]_{p,q}$ with $C_{k-1}=\Gamma'',\Gamma''\{x \mapsto x'\} \imp E[x']_{q}$.
If $C \neq C_{k-1}$, then $C \in N_k$ fulfills the claim.
Otherwise, $C_k=\Gamma'' \imp P(t)\{x' \mapsto x\} \in N_k$ fulfills the claim since $s = \skt(t)\sigma = \skt(t\{x' \mapsto x\})\sigma$ and $\Gamma'' \subseteq \Gamma'',\Gamma''\{x \mapsto x'\}$.

Let $\apr$ be a shallow transformation that replaces $C_k = \Gamma'' \imp E[s']_{p}$ with $C_{k-1}=S(x),\Gamma_1 \imp E[x]_p$ and $C'_{k-1}=\Gamma_2 \imp S(s')$.
Since $S$ is fresh, $C \neq C'_{k-1}$.
If $C \neq C_{k-1}$, then $C \in N_k$ fulfills the claim.
Otherwise, $C = C_{k-1} = S(x),\Gamma_1 \imp P(t[x]_p)$ and hence, $s = \skt(t[x]_p)\sigma$ and $S(s'')\in \I{N_0}$ for $s=s[s'']_p$.
Then by the induction hypothesis, there exist a clause $C_S = \Gamma_S \imp \Delta_S,S(t_S) \in N_{k-1}$ and a substitution $\sigma_S$ such that
$s'' = \skt(t_S)\sigma_S$ and for each variable $x$ and predicate $S'$ with $C_S = S'(x),\Gamma'_S \imp \Delta_S,P(t_S[x]_q)$, $S'(s''') \in \I{N_0}$, where $s''=s''[s''']_q$.
By construction, $C_S = C'_{k-1}$. 
Thus,  $s''= \skt(s')\sigma_S$ and $s = \skt(t[x]_p)\sigma$ imply there exists a $\sigma''$ such that $s =  \skt(t[s']_p)\sigma"$.
Furthermore since $\Gamma_1 \cup \Gamma_2 = \Gamma''$, if $C_k= S'(x),\Gamma''' \imp P(t[s']_{p})[x]_q$, then 
either $S'(x) \in \Gamma_1$ and thus  $S'(s'''') \in \I{N_0}$, where $s=s[s'''']_q$, or 
$S'(x) \in \Gamma_2$ and thus  $S'(s'''') \in \I{N_0}$, where $s[s'']_p=(s[s'']_p)[s'''']_q$.
Hence, $C_k \in N_k$  fulfills the claim.
\end{proof}

\begin{lem}\label{termSkelLem4}
Let $N$ be a clause set and $N'$ be its approximation via  $\apr$. 
If $P(s) \in \I{N'}$ $(T(f_p(s_1,\ldots,s_n))\in \I{N'})$ 
and $P$ is a predicate in $N$, then there exist a clause 
$\Gamma \imp \Delta,P(t) \in N$  ($\Gamma \imp \Delta,P(t_1,\ldots,t_n) \in N$)
and a substitution $\sigma$ such that
$s = \skt(t)\sigma$ ($s_i = \skt(t_i)\sigma$ for all $i$).
\end{lem}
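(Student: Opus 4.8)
The plan is to reduce Lemma~\ref{termSkelLem4} to Lemma~\ref{termSkelLem3} by exactly the same device used to derive Lemma~\ref{termSkelLem} from Lemma~\ref{termSkelLem2}: first eliminate all non-monadic predicates from $N$ by a sequence of Monadic projections, obtaining a monadic clause set with the same $\apr$-approximation, then invoke the monadic version of the statement (Lemma~\ref{termSkelLem3}) and finally transport the resulting clause and skeleton-substitution back through the inverse projections $\R{P_i}$.

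Concretely, let $P_1,\ldots,P_n$ be the non-monadic predicates occurring in $N$ and set $N_{MO}=\Proj{P_1}(\ldots(\Proj{P_n}(N))\ldots)$. Then $N_{MO}$ is monadic, and since the Monadic transformation is the highest-priority rule in $\apr$ and simply replaces every $P_i(\bar t)$ by $T(f_{P_i}(\bar t))$, the clause set $N_{MO}$ has the same approximation $N'$ via $\apr$. The key point I would stress here is that the partial minimal model operator $\I{\cdot}$ only depends on the clause set and the ordering, so $\I{N'}$ is unchanged. Now split into the two cases of the statement. If $P(s)\in\I{N'}$ with $P$ a (monadic) predicate of $N$, then $P$ is also a predicate of $N_{MO}$, so Lemma~\ref{termSkelLem3} gives a clause $\Gamma\imp\Delta,P(t)\in N_{MO}$ and a substitution $\sigma$ with $s=\skt(t)\sigma$; applying $\R{P_1}(\ldots(\R{P_n}(\cdot))\ldots)$ to this clause yields a clause of $N$ of the required shape, and since these inverse projections do not touch the $P$-atom (as $P\neq P_i$), the skeleton term $t$ and the substitution $\sigma$ are preserved. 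If instead $T(f_P(s_1,\ldots,s_n))\in\I{N'}$ with $P$ a predicate of $N$, then $T$ is monadic and occurs in $N_{MO}$, so Lemma~\ref{termSkelLem3} gives $\Gamma\imp\Delta,T(t)\in N_{MO}$ with $f_P(s_1,\ldots,s_n)=\skt(t)\sigma$; from the definition of $\skt$ this forces $t=f_P(t_1,\ldots,t_n)$ with $s_i=\skt(t_i)\sigma$ for all $i$, and applying the inverse projections turns $T(f_P(t_1,\ldots,t_n))$ back into $P(t_1,\ldots,t_n)$, giving a clause of $N$ of the required form.

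The only genuine content beyond the already-proved monadic Lemma~\ref{termSkelLem3} is the observation that $N$ and $N_{MO}$ have literally the same $\apr$-approximation and that $\I{N'}$ is well-defined independent of how one reached $N'$; everything else is the bookkeeping of pushing the clause back through $\R{P_1},\ldots,\R{P_n}$ and checking that these maps commute with forming the term skeleton of the relevant argument positions. Thus the main obstacle is essentially absent here: the real work was done in Lemma~\ref{termSkelLem3}, whose proof in turn mirrors Lemma~\ref{termSkelLem2} with $I$ replaced by $\I{N_0}$, the base case now holding by the very definition of the model operator $\I{\cdot}$ rather than by a minimality argument. I would therefore keep this proof short, explicitly citing Lemma~\ref{termSkelLem3} for the monadic case and noting that the translation step is identical to that in the proof of Lemma~\ref{termSkelLem}.
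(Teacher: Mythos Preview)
Your proposal is correct and follows essentially the same approach as the paper's own proof: define $N_{MO}$ via the monadic projections, note it shares the approximation $N'$, invoke Lemma~\ref{termSkelLem3} on $N_{MO}$, and push the resulting clause back through the inverse projections $\R{P_i}$ in each of the two cases. The paper's proof is in fact a verbatim copy of its proof of Lemma~\ref{termSkelLem} with $I$ replaced by $\I{N'}$ and the reference to Lemma~\ref{termSkelLem2} replaced by Lemma~\ref{termSkelLem3}, exactly as you anticipated.
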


\begin{proof}
Let $P_1,\ldots,P_n$ be the non-monadic predicates in $N$ and $N_{MO}=$ $\Proj{P_1}(\ldots(\Proj{P_n}(N)))$.
Then, $N_{MO}$ is monadic and also has $N'$ as its approximation via $\apr$.

 Let $P(s) \in \I{N'}$ and $P$ is a predicate in $N$. Since $P$ is monadic, $P$ is a predicate in $N_{MO}$.
Hence by Lemma ~\ref{termSkelLem3}, there exists a clause $\Gamma \imp \Delta,P(t) \in N_{MO}$ 
and a substitution $\sigma$ such that $s = \skt(t)\sigma$. 
Then, $\R{P_1}(\ldots(\R{P_n}(\Gamma \imp \Delta,P(t)))\ldots)= \R{P_1}(\ldots(\R{P_n}(\Gamma)\ldots) \imp \R{P_1}(\ldots(\R{P_n}(\Delta)\ldots),P(t)\in N$ fulfills the claim.

Let $T(f_p(s_1,\ldots,s_n))\in \I{N'}$ and $P$ is a predicate in $N$. $T$ is monadic and a predicate in $N_{MO}$.
Hence by Lemma ~\ref{termSkelLem3}, there exists a clause $\Gamma \imp \Delta,T(t) \in N_{MO}$ 
and a substitution $\sigma$ such that $f_p(s_1,\ldots,s_n) = \skt(t)\sigma$.
Therefore, $t=f_p(t_1,\ldots,t_n)$ with  $s_i = \skt(t_i)\sigma$ for all $i$.
Then, $\R{P_1}(\ldots(\R{P_n}(\Gamma \imp \Delta,T(f_p(t_1,\ldots,t_n))))\ldots)=\R{P_1}(\ldots(\R{P_n}(\Gamma)\ldots) \imp \R{P_1}(\ldots(\R{P_n}(\Delta)\ldots),P(t_1,\ldots,t_n) \in N$ fulfills the claim.
\end{proof}

%
%
%

\end{document}